\newcolumntype{L}[1]{>{\raggedright\let\newline\\\arraybackslash\hspace{0pt}}m{#1}}
\newcolumntype{C}[1]{>{\centering\let\newline\\\arraybackslash\hspace{0pt}}m{#1}}
\newcolumntype{R}[1]{>{\raggedleft\let\newline\\\arraybackslash\hspace{0pt}}m{#1}}
\newtheorem{thm}{Theorem}
\newtheorem{lem}{Lemma}
\newtheorem{mydef}{Definition}
\newtheorem{obs}{Observation}
\newcommand{\set}[1]{\mathcal{#1}} 
\newcommand{\eqdef}{\triangleq} 
\newcommand{\mat}[1]{\mathsf{#1}} 
\newcommand{\schd}[1]{\set{#1}} 
\newcommand{\vect}[1]{\bm{#1}} 
\newcommand{\card}[1]{\left|#1\right|} 
\newcommand{\trans}[1]{#1^{\textup{\textsf{\tiny T}}}} 
\newfont{\mycrnotice}{ptmr8t at 7pt}
\newfont{\myconfname}{ptmri8t at 7pt}
\begin{document}

\def\sharedaffiliation{%
\end{tabular}
\begin{tabular}{c}}

\renewcommand{\vec}[1]{\mathbf{#1}}



\title{Optimal Joint Routing and Scheduling in Millimeter-Wave Cellular Networks}

\author{
\IEEEauthorblockN{Dingwen Yuan\IEEEauthorrefmark{1},
    Hsuan-Yin Lin\IEEEauthorrefmark{2},
    J\"{o}rg Widmer\IEEEauthorrefmark{3} and
    Matthias Hollick\IEEEauthorrefmark{1} } 
\begin{tabular}{*{3}{>{\centering}c}}
\IEEEauthorrefmark{1}SEEMOO, Technische Universit\"{a}t Darmstadt & \IEEEauthorrefmark{2}Simula@UiB, Bergen Norway & \IEEEauthorrefmark{3}Institute IMDEA Networks \tabularnewline
\url{{firstname.lastname}@seemoo.tu-darmstadt.de} & \url{hsuan-yin.lin@ieee.org} & \url{joerg.widmer@imdea.org}
\end{tabular}
}

\maketitle


%

\setboolean{long}{true}  

\begin{abstract}
Millimeter-wave (mmWave) communication is a promising technology to cope with the expected exponential increase in data traffic in 5G networks. mmWave networks typically require a very dense deployment of mmWave base stations (mmBS).
To reduce cost and increase flexibility, wireless backhauling is needed to connect the mmBSs. The characteristics of mmWave communication, and specifically its high directionality, imply new requirements for efficient routing and scheduling paradigms. 
We propose an 
efficient scheduling method, so-called {\em schedule-oriented optimization}, based on matching theory that optimizes QoS metrics jointly with routing. It is capable of solving any scheduling problem that can be formulated as a linear program whose variables are link times and QoS metrics.
As an example of the schedule-oriented optimization, we show the optimal solution of the maximum throughput fair scheduling (MTFS). Practically, the optimal scheduling can be obtained even for networks with over 200 mmBSs. To further increase the runtime performance, we propose an efficient edge-coloring based approximation algorithm with provable performance bound. It achieves over 80\% of the optimal max-min throughput and runs 5 to 100 times faster than the optimal algorithm in practice.
Finally, we extend the optimal and approximation algorithms for the cases of multi-RF-chain mmBSs and integrated backhaul and access networks.
\end{abstract}

\section{Introduction}
5G cellular systems are embracing millimeter wave (mmWave) communication in the $10$-$300$ GHz band where abundant bandwidth is available to achieve Gbps data rates. One of the main challenges for mmWave systems is the high propagation loss at these frequency bands. 
Although it can be partially compensated by directional antennas~\cite{Rappaport:2013jk, Rangan:2014ia}, the effective communication range of a mmWave base station (mmBS) remains around $100$ meters at best. Thus, base station deployment density in 5G will be significantly higher than in 4G~\cite{Singh:2015eh, ghosh2014millimeter}. This leads to high infrastructure cost for the operators. Besides the cost of site lease, backhaul link provisioning is in fact the main contributor to this cost because the mmWave access network may require multi-Gbps backhaul links to the core network. Currently, such a high data rate can only be accommodated by fiber-optic links which have high installation cost and are inflexible with respect to relocation.  

Recent studies show that mmWave self-backhauling is a  cost-effective alternative to the wired backhauling. This approach is particularly interesting in a heterogeneous network setting where the existing cellular base stations (eNBs) act as a gateway for the mmBSs. Fig.~\ref{fig:net-model} illustrates such a setup in which the eNB can reach mmBSs directly or via other mmBSs.
Moreover, directionality of mmWave communication reduces or removes the wireless backhaul interference and allows simultaneous transmissions of multiple links over the same channel as long as their beams do not overlap. However, the number of simultaneous links a base station can have is limited by the number of its RF chains. 

\begin{figure}[t!]
\centering
		\includegraphics[width=0.5\textwidth]{./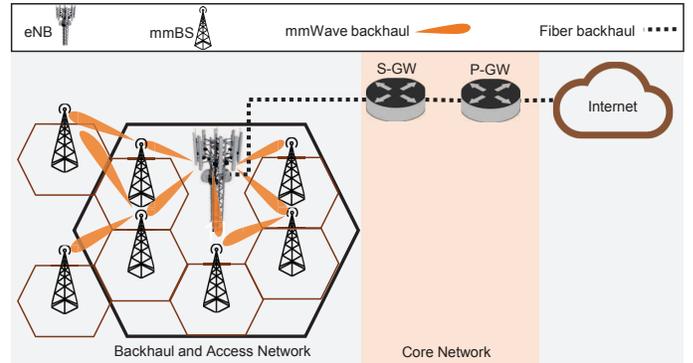} 
		\caption{mmWave self-backhauling setup.}
		\label{fig:net-model}
\end{figure}

To date, much of the research on mmWave communication has been dedicated to issues faced by the mobile users (UEs) in the access networks. 
How to maximize performance such as throughput and energy efficiency in mmWave backhaul and access networks has received less attention. Here, two important issues need to be addressed: (i) routes to be taken, (ii) the scheduling of the transmission overs the links.


A naive scheduling which lets the eNB serve all the mmBSs in a round robin fashion is neither practical nor efficient.
If mmBSs' links to the eNB are weak compared to their links to other nearby mmBSs (which in turn have high-capacity links to the eNB), a schedule allowing multi-hop routing is much more favorable since it alleviates the bottleneck at the eNB. At the same time, the limited interference at mmWaves makes it efficient to maximize spatial reuse and operate as many links simultaneously as possible. 
The goal of the paper is to design a scheduler that exploits these characteristics to optimize mmWave backhaul efficiency.

The paper is organized as follows. We discuss related work in Sec.~\ref{s:related}. Sec.~\ref{s:sysmodel} provides the system model. The relation between a schedule and matchings is studied in Sec.~\ref{s:preliminary}. In Sec.~\ref{s:sched-fair}, we present our optimal  schedule-oriented optimization method, through an example---maximum throughput fair scheduling. We then propose a fast edge-coloring based approximation algorithm in Sec.~\ref{s:approx-algo}. In Sec.~\ref{s:extend}, we extend our algorithms to more general scenarios. Sec.~\ref{s:eval} shows the numerical evaluation and Sec.~\ref{s:conclusions} concludes the paper.

\section{Related Work}
\label{s:related}
Few works on mmWave backhaul and access network scheduling exist~\cite{Niu15, Zhu16, Feng16, Li17}. These works share the assumptions that (i) the traffic demand is measured in discrete units of slots or packets, and (ii) a flow has to be scheduled sequentially, i.e., a hop closer to the source should be scheduled earlier than a hop farther away. The resulting optimization problems are all formulated as mixed integer programming (MIP) problems. As MIPs are in general NP-complete, optimal solutions can only be computed for small networks with a few nodes. For practical use, these works all rely on heuristics, which are based on the ideas such as greedy edge coloring~\cite{Niu15, Feng16} or finding the maximum independent set in a graph~\cite{Zhu16, Li17}. Furthermore, \cite{Niu15, Zhu16, Li17} assume that routing is pre-determined, which does not fully exploit the freedom given by a reconfigurable mmWave backhaul, and may limit performance.

In contrast, our work relaxes the constraint of sequential flow scheduling (i.e., if needed, packets are queued for a short time) which does not harm the long-term throughput, and allows the slots in a schedule to be of any length. Based on these assumptions, we propose a polynomial time optimal scheduling method which is shown by simulation to be practical for mmWave cellular networks. Moreover, the scheduling takes QoS optimization goals or QoS requirements as input and finds an optimal routing automatically. The first attempt to solve the problem of joint routing and scheduling in a network with Edmonds' matching formulation goes back to~\cite{Hajek88}. Hajek et al.'s polynomial time scheduling algorithm is different from ours in that it minimizes the schedule length. Furthermore, we use a one-step {\em schedule-oriented} approach while they first compute the optimal link time and then compute the minimum length schedule given the link time.

Following~\cite{Hajek88}, recent research on scheduling focuses on optimization of delay~\cite{Huang13} and queue length~\cite{Angelakis14}, as well as investigating more realistic interference models~\cite{HariharanS12}. 
Another interesting line of research is the on-line node-based scheduling algorithms which achieve good  performance bound in throughput and evacuation time~\cite{Ji16}.

\section{System model}
\label{s:sysmodel}
The system model considers a backhaul network which has a single eNB, equipped with multiple mmWave RF chains, and multiple mmBSs, each equipped with a single RF chain. 
Later in Sec.~\ref{s:extend}, we will show that our optimization method applies equally to (i) the case where each node has multiple RF chains, and (ii) a network model that includes UEs.

We consider an eNB macro cell together with a number of mmWave base stations in a heterogeneous TDMA cellular network. The eNB acts as the backhaul gateway for mmBSs. In addition to an LTE radio interface, the eNB is equipped with $ R $ mmWave RF chains. There are $W$ single-RF-chain mmBSs in the macro cell. We assume analog or hybrid beamforming with $R$ RF chains which allows up to $R$ simultaneous links at the eNB.
We use directed graphs to model the links between the different nodes in the network. Fig.~\ref{fig:sys_model} illustrates a toy example of a backhaul network. The following analysis focuses on downlink communication. The same analysis can be applied to the uplink scenario. 

Let $G = (\set{V} ,\set{E})$ be a directed graph with the vertex (node) set $\set{V}$ and edge (link) set $\set{E}$. Each edge represents a potential link between two vertices. The capacity of each edge $e$ is denoted $c_e$. 
The received power is given by $p _ { \text{rx} }  = p _ { \text{tx} } + g _ { \text{x} } - PL$, where $ p_ { \text{tx} } $ is the transmission power, $ g_{ \text{x} } $ is the directivity gain, and $ PL $ is path loss between the transmitter and the receiver. 
$PL ( d )  = \alpha + 10 \beta \log _ { 10 } { d } + \xi$,
where  $ \alpha $, $\beta$ are constants that depend on the frequency and line-of-sight conditions. $d$ is the distance between the transmitter and receiver. $\xi$ represents the shadowing effect and is a normal distributed random variable with zero mean and $\sigma$ standard deviation. 

\begin{figure}[!t]
\centering
	\includegraphics[width=5cm]{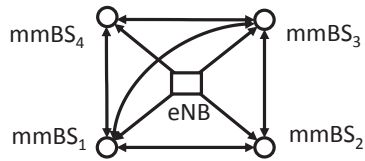}
\caption{A backhaul network example. The edges shown are the potential links for downlink schedule. A bidirectional edge represents two opposite links.}
\label{fig:sys_model}	
\end{figure}

We can observe in Fig.~\ref{fig:sys_model} that there are many ways to schedule downlink communication among the eNB and mmBSs. 
Our goal is to obtain the optimal {\em unit length schedule} with respect to a QoS metric, while satisfying given QoS requirements and the  constraints on simultaneous transmissions. In practice, the unit time is the duration of the radio frame. We observe that each feasible schedule $ \set{S} $ can always be divided into  $ N \geq 1$ slots numbered as 1 to $N$. We define $ t_i $ as the length of the $i$-th slot. It is required that $ \sum _ { i = 1 } ^ N t_i = 1 $ and $t_i > 0$.  Moreover, in the $i$-th slot, a set of links $\set{E}_i  \subseteq \set{E}$ (can be empty) are active for the whole slot.

\section{Preliminary: schedule polyhedron}
\label{s:preliminary}
This section first shows the relation between a feasible schedule and matchings in a graph. Based on the relation, we mathematically formulate the set of all feasible schedules as the {\em schedule polyhedron} that is described by linear constraints.

Suppose that a set of links $\set{E}_i$ are scheduled in the $i$-th slot, and $e_1, e_2 \in \set{E}_i$ are two different links. Then $e_1$ and $e_2$ can not share a common mmBS node since a mmBS has one RF chain and is therefore half-duplex. On the other hand, $e_1$ and $e_2$ may share the eNB node given that $R > 1$. However, the number of links in $\set{E}_i$ that are incident to the eNB cannot be more than $R$. We enforce this constraint through the eNB expansion. 

\subsection{eNB expansion}\label{ss:e-v}
In graph $G$, we replace the eNB with $R$ {\em expanded eNBs}: $\text{eNB}_1,...,\text{eNB}_R$.  If the eNB is connected to a set of mmBSs, then each $\text{eNB}_i$ is connected to the same set of mmBSs with the same respective link capacities as the eNB. The resulting graph is equivalent to the original graph with respect to scheduling. Yet each expanded eNB has one RF chain. In the rest of the paper, the graph $G = (\set{V}, \set{E})$ is assumed to be expanded if not explicitly stated otherwise. An example of the process described above is shown in Fig.~\ref{fig:veNB}. Let $\set{R}$ and $\set{W}$ denote the set of expanded eNBs and mmBSs, respectively. 

As a result, we can ensure that the active link set of a slot corresponds to a {\em matching} in the expanded graph. 
A matching in a graph is defined as a set of edges in the graph that share no common vertices.  

\begin{figure}[!t]
\centering
	\includegraphics[width=0.5\textwidth]{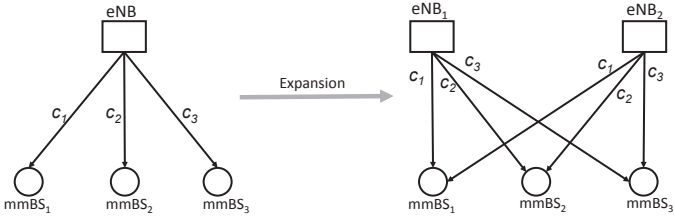}
\caption{eNB expansion with two RF chains.}
\label{fig:veNB}	
\end{figure}

\subsection{The schedule polyhedron}
We define the {\em link time} $t_e \in [0, 1]$ as the total active time of a link $e$ in a schedule. Correspondingly, $\vect{t}$ is the {\em link time vector}, each element of which is a link time $t_e, \forall e \in \set{E}$. 
A link time vector $\vect{t}$ is {\em feasible} if $\vect{t}$ can be scheduled in unit time.
We first define the {\em schedule polyhedron} $P$ and then prove that each point in $P$ is one-to-one mapped to each feasible link time vector.

\begin{mydef}[Schedule Polyhedron] Given a graph $G = (\set{V}, \set{E})$, the schedule polyhedron of $G$ is defined as the set of link time vectors $\vect{t}$ that satisfy the  following linear constraints.
\begin{IEEEeqnarray}{rCll}
    \IEEEyesnumber\label{eq:link-active-time}\IEEEyessubnumber*
    \sum_{e\in\delta(v)}t_e & \le &1 &\qquad \forall\, v \in \set{V},
    \label{eq:node-matching}\\
    \sum_{e\in \set{E}(\set{O})}t_e& \le &
    \left\lfloor\frac{|\set{O}|}{2}\right\rfloor &\qquad
    \forall\,\textnormal{ odd set }\set{O} \subseteq
    \set{V},\label{eq:oddset} \\
    t_e& \ge &0  &\qquad \forall\,e \in \set{E},
  \end{IEEEeqnarray}
where $\delta(v)$ is the set of links incident to node $v$. An {\em odd set} $\set{O}$ has odd number of nodes. $\set{E}(\set{O})$ is the set of links whose endpoints are both contained in $\set{O}$. 
\end{mydef}

The following lemma summarizes the relation between the schedule polyhedron and the unit length schedules.

\begin{lem} (1) Each point in the schedule polyhedron $P$ is a feasible link time vector $\vect{t}$, and (2) each feasible link time vector $\vect{t}$ is a point in $P$.
\label{lm:schedule-matching}
\end{lem}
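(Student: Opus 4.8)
The plan is to recognize the schedule polyhedron $P$ as Edmonds' \emph{matching polytope} of the (expanded) graph $G$ and to restate ``feasibility'' of a link time vector as a convex-combination property. By the eNB expansion of Sec.~\ref{ss:e-v}, the active link set $\set{E}_i$ of each slot is a matching $M_i$; writing $\vect{\chi}^{M}$ for the $0/1$ incidence vector of a matching $M$, a unit length schedule with slot lengths $t_1,\dots,t_N$ produces the link time vector $\vect{t}=\sum_{i=1}^{N} t_i\,\vect{\chi}^{M_i}$ with $t_i>0$ and $\sum_{i=1}^{N} t_i=1$. Hence a link time vector is feasible if and only if it is a convex combination of matching incidence vectors, i.e. it lies in $\operatorname{conv}\{\vect{\chi}^{M}: M\textnormal{ a matching of }G\}$. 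Proving the lemma therefore reduces to showing that this convex hull coincides with $P$, and I would split the argument along the two inclusions.

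For direction~(2), the direction ``feasible $\Rightarrow$ in $P$,'' it suffices to verify that every single matching incidence vector $\vect{\chi}^{M}$ satisfies the three constraint families, since $P$ is an intersection of halfspaces and hence convex, so any convex combination of points satisfying the inequalities satisfies them too. Non-negativity is immediate, and the degree constraint \eqref{eq:node-matching} holds because a matching meets each vertex at most once. For the odd-set constraint \eqref{eq:oddset}, I would observe that the edges of $M$ lying inside an odd set $\set{O}$ themselves form a matching on $\set{O}$; each such edge consumes two of the $\card{\set{O}}$ vertices, so at most $\lfloor\card{\set{O}}/2\rfloor$ of them can be present. Applying this to the decomposition of a feasible $\vect{t}$ then places $\vect{t}$ in $P$.

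Direction~(1), ``in $P\Rightarrow$ feasible,'' is the substantive half and is precisely the content of Edmonds' matching polytope theorem: the polytope cut out by non-negativity together with the degree inequalities \eqref{eq:node-matching} and the odd-set inequalities \eqref{eq:oddset} is integral, its vertices being exactly the matching incidence vectors, so it equals $\operatorname{conv}\{\vect{\chi}^{M}\}$. Granting this, any $\vect{t}\in P$ can be written as $\vect{t}=\sum_{j}\lambda_j\,\vect{\chi}^{M_j}$ with $\lambda_j\ge 0$ and $\sum_j\lambda_j=1$; discarding the terms with $\lambda_j=0$ and taking each surviving $M_j$ as the active link set of a slot of length $\lambda_j$ yields a valid unit length schedule (with $N\ge 1$ and all slot lengths strictly positive) that realizes $\vect{t}$, establishing feasibility. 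The main obstacle is exactly this integrality claim---that the odd-set inequalities are enough to exclude every fractional vertex. I would either invoke Edmonds' theorem as a known result, or, for a self-contained proof, assume a fractional vertex of minimal edge support and derive a contradiction via the standard uncrossing argument, using the tightness of some odd-set constraint to reduce to a strictly smaller instance.
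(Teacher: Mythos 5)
Your proposal is correct and follows essentially the same route as the paper's own proof: both identify $P$ with Edmonds' matching polytope of the expanded graph and translate a unit length schedule into a convex combination $\vect{t}=\sum_i t_i\,\vect{\chi}^{M_i}$ of matching incidence vectors, invoking Edmonds' theorem for the substantive inclusion $P\subseteq\operatorname{conv}\{\vect{\chi}^{M}\}$ and convexity for the reverse direction. The only cosmetic difference is that you verify the degree and odd-set inequalities for a single matching by direct counting, whereas the paper obtains the same fact by citing Edmonds' theorem and then summing the per-slot inequalities weighted by the slot lengths.
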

\begin{proof}
The proof uses the Edmonds' matching polyhedron theorem~\cite{Edmonds65b}. A feasible schedule $\schd{S}$ consists of $N \ge 1$ slots. Each slot contains a set of links from $G$ that is a matching and therefore corresponds to a vertex of the matching polyhedron $Q$. Since $Q$ has the same formulation as $P$, except the variables are binary, $P$ and $Q$ has the same set of vertices. Since $\schd{S}$ has a length of 1, the link time vector $\vect{t}$ of $\schd{S}$ is a convex combination of the vertices of $P$. So it is a point in $P$. On the other hand, a point in $P$ can be written as a convex combination of all vertices of $P$ where each vertex corresponds to a slot. Therefore, each point in $P$ corresponds to a feasible unit time schedule. See the details in 
\ifthenelse{\boolean{long}}{Appendix~\ref{sec:proof_lemma1}}{the extended version~\cite{extended}}.  
\end{proof}

\section{Maximum Throughput Fair Scheduling} \label{s:sched-fair}
Having established the relation between a schedule and matchings, we now investigate the problem of {\em maximum throughput fair scheduling} (MTFS) for backhaul networks. 
The goal of the problem is to maximize the downlink {\em network throughput} under the condition that the {\em max-min fairness}~\cite{Tang06, Tassiulas02} in throughput is achieved at the mmBSs.
The MTFS problem serves as one example of our method for scheduling optimization in mmWave backhaul networks.

\begin{mydef}[Maximum Throughput Fair Schedule] Given a backhaul network $G$ and a unit time schedule $\schd{S}$, let the throughput vector of $\schd{S}$ be $\vect{h}^{\schd{S}} = [h^{\schd{S}}_v | v \in \set{W}]$, where $h^{\schd{S}}_v$ denotes the downlink throughput of an mmBS node $v$.

\noindent (i) A feasible unit time schedule $\schd{S}_f$ is said to satisfy the
    max-min fairness criteria if
    $\min_{v\in\set{W}} h_v^{\schd{S}_f}\geq
    \min_{v\in\set{W}}h_v^\schd{S}$ for any feasible unit time schedule
    $\schd{S}$. Such $\min_{v\in\set{W}} h_v^{\schd{S}_f}$ is called the max-min throughput.
  
\noindent (ii) A feasible unit time schedule $\schd{S}^*$ is a solution of the MTFS
    problem if $\schd{S}^*$ has achieved the maximum network throughput
    $\sum_{v\in\set{W}}h^{\schd{S}_f}_v$ among all possible feasible
    unit time schedule $\schd{S}_f$ satisfying the max-min fairness criteria in (i).
\end{mydef}

In the following, we present our general optimization method---{\em schedule oriented optimization}.

\subsection{Schedule oriented optimization}
The schedule oriented optimization solves a linear optimization problem, the solution to which is directly the optimal schedule. For the mathematical formulation of the optimization problem, we construct the {\em node-matching matrix}.

\begin{mydef}[Node-Matching Matrix] Given a directed graph
  $G=(\set{V}, \set{E})$. Suppose the number of all possible matchings
  of $G$ is $K$. Then the node-matching matrix $\mat{A}=[a_{i,j}]$ is
  a $|\set{V}| \times K$ matrix, whose elements is defined as
  follows: $a_{i,j} = c$, if there is a link with capacity $c$ entering
  node $i$ in the $j$-th matching; $a_{i,j} = -c$, if there is a link
  with capacity $c$ leaving node $i$ in the $j$-th
  matching. Otherwise, $a_{i,j}=0$.
  \label{def:node-matching-matrix}
\end{mydef}
\begin{figure}[!t]
\begin{minipage}{0.15\textwidth}
\begin{figure}[H]
	\centering
	\includegraphics[width=3.8cm]{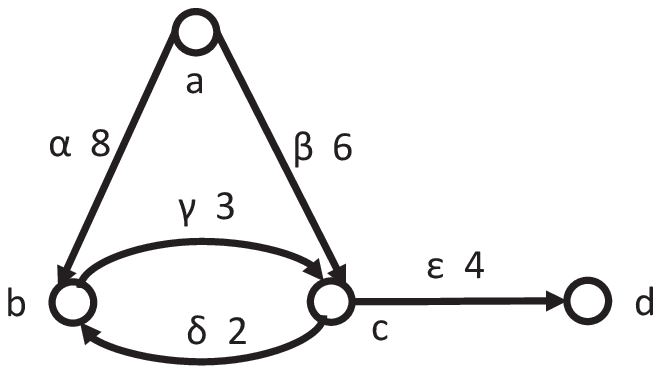}	
\end{figure}
\end{minipage} \quad \quad
\begin{minipage}{0.2\textwidth} \footnotesize
$
\begin{blockarray}{ccccccc}
   & \alpha & \theta & \gamma & \delta & \epsilon & \alpha,\epsilon \\
  \begin{block}{c[cccccc]}
    a  & -8 & -6 & 0  & 0  & 0  & -8\\
    b  & 8  & 0  & -3 & 2  & 0  & 8\\
    c  & 0  & 6  & 3  & -2 & -4 & -4\\
    d  & 0  & 0  & 0  & 0  & 4  & 4\\
  \end{block}
\end{blockarray}$
\end{minipage}
\caption{Node-matching matrix, $a$ to $d$ are nodes, $\alpha$ to $\epsilon$ are edges. The numbers are capacities.}
\label{fig:node-matching}
\end{figure}

As we will see, the node-matching matrix helps in formulating the throughput constraints at individual nodes.
Fig.~\ref{fig:node-matching} gives an example of node-matching matrix for a graph. 

Let $\mat{A}$ be the node-matching matrix of the backhaul network $G$, we define $\mat{A}^\set{W}$ as the submatrix of $\mat{A}$, which consists only of the rows of $\mat{A}$ related to the nodes in $\set{W}$ (mmBSs). 
As we pointed out in Sec.~\ref{s:preliminary}, the link set scheduled in each slot of a schedule must be a matching in $G$. We define $\vect{t}^\schd{S}$ as a $K\times 1$ {\em slot length vector}, each element of which is the length of a potential slot. Let the minimum throughput among all mmBSs be $\theta$. Then we can solve the MTFS problem in two steps: (i)
maximizing $\theta$ (such $\theta$ is the
max-min throughput) and (ii)
computing the optimal schedule $\schd{S}^*$ that offers the highest network throughput, given the max-min
throughput $\theta$.

\textbf{Linear programs for the MTFS problem.} 
The linear program to
maximize $\theta$ for step (i) of the MTFS problem can be formulated as
follows,
\begin{IEEEeqnarray}{lrCl}
  \IEEEyesnumber\label{eq:mtf-theta}\IEEEyessubnumber*
  \textnormal{maximize} & \theta & &
  \\
  \textnormal{subject to}\quad& \mat{A}^{\set{W}}\vect{t}^\schd{S}& \geq &\theta\vect{1}\label{eq:mtf-theta1}
  \\
  & \trans{\vect{1}}\vect{t}^\schd{S}& = &1\label{eq:mtf-theta2}
  \textnormal{ and }
   \vect{t}^\schd{S} \geq \vect{0},
\end{IEEEeqnarray}
where $\vect{1}$ and $\vect{0}$ represent the all-one and all-zero
column vector, respectively. The superscript ``$\trans{}$" denotes the
vector transposition. \eqref{eq:mtf-theta1} is the constraint that the
throughput at each mmBS should be at least
$\theta$. \eqref{eq:mtf-theta2} is the constraint that the schedule
length should be unit time. The feasibility of the schedule is
implicitly guaranteed by the formulation in terms of all possible
matchings.
 
After we have computed $\theta$ from \eqref{eq:mtf-theta}, we can formulate the linear program that maximizes the {\em network throughput}, i.e., the total throughput of all mmBSs under the condition that each mmBS has throughput at least $\theta$. 
\begin{IEEEeqnarray}{llCl}
  \IEEEyesnumber\label{eq:mtf}\IEEEyessubnumber*
  \textnormal{maximize} &\trans{\vect{c}}\vect{t}^\schd{S} & &
  \\
  \textnormal{subject to}\quad&\textnormal{\eqref{eq:mtf-theta1}, and \eqref{eq:mtf-theta2}.}
\end{IEEEeqnarray}
Here, $\vect{c}$ is the {\em capacity vector} whose element $c_j$ is
the cumulative capacity of all eNB-to-mmBS links in the $j$-th
matching $M_j$, i.e.,
$c_j = \sum_{(v,v')\in M_j,v\in\set{R}} c_{(v,v')}$, where
$(v,v')$ denotes the link from node $v$ to node $v'$. 
Note that $\theta$ is a variable in
\eqref{eq:mtf-theta}, but it is a constant in \eqref{eq:mtf}.

The apparent difficulty in solving \eqref{eq:mtf-theta} and
\eqref{eq:mtf} is the huge number of elements in $\vect{t}^\schd{S}$
(same as the number of matchings in $G$, which is exponential to the
number of vertices in $G$). Yet, we will show that we can still solve it in
polynomial time.

\begin{thm}
\label{thm:MTFS-polynomial-time}
	The MTFS problem can be solved in polynomial time with the ellipsoid algorithm~\cite{Khachiyan80}.
\end{thm}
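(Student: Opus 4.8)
The plan is to exploit the asymmetry of the two linear programs \eqref{eq:mtf-theta} and \eqref{eq:mtf}: each has an exponential number of variables (one slot length $t^\schd{S}_j$ per matching $M_j$, i.e.\ $K$ of them) but only $\card{\set{W}}+1$ nontrivial constraints. The ellipsoid algorithm cannot be run on the primal directly because of this exponential dimension, so the first step is to pass to the Lagrangian dual of each program. The dual of \eqref{eq:mtf-theta} carries $\card{\set{W}}+1$ variables $(\vect{y},z)$ --- one $y_v$ per throughput constraint in \eqref{eq:mtf-theta1} and one $z$ for the unit-time equality \eqref{eq:mtf-theta2} --- and exactly $K$ inequality constraints, one per matching, arising from the primal variables $t^\schd{S}_j$. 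Thus the dual is a polynomial-dimensional LP with exponentially many constraints, which is precisely the regime where the ellipsoid method succeeds, provided a polynomial-time separation oracle exists.

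The heart of the argument is to show that this separation oracle is a maximum-weight matching computation. Given a candidate dual point $(\vect{y},z)$, the constraint indexed by matching $M_j$ relates $\sum_{v\in\set{W}} a_{v,j}\,y_v$ to the dual variable $z$, where $a_{v,j}$ is the signed, capacity-weighted contribution of $M_j$ at node $v$ from Definition~\ref{def:node-matching-matrix}. Since $a_{v,j}=\sum_{e\in M_j\cap\delta(v)}(\pm c_e)$ decomposes over the edges of $M_j$, the sum $\sum_{v\in\set{W}} a_{v,j}\,y_v$ equals $\sum_{e\in M_j} w_e(\vect{y})$ for a well-defined per-edge weight $w_e(\vect{y})$ (each edge has at most one mmBS endpoint appearing in $\mat{A}^{\set{W}}$). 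Hence the most violated constraint is obtained by optimizing $\sum_{e\in M_j} w_e(\vect{y})$ over all matchings of $G$, a maximum-weight matching problem solvable in polynomial time by Edmonds' blossom algorithm~\cite{Edmonds65b}. If the optimal matching value respects the bound imposed by $z$ the point is feasible; otherwise the optimal matching yields a separating hyperplane.

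With a polynomial-time separation oracle in hand, the equivalence of separation and optimization for the ellipsoid algorithm~\cite{Khachiyan80} delivers an optimal dual solution in polynomial time, and by LP duality the optimal value $\theta$ of step~(i). The identical construction applied to \eqref{eq:mtf} (now with $\theta$ a constant and objective vector $\vect{c}$) produces the optimal network throughput of step~(ii). To recover an actual schedule $\schd{S}^*$, one uses that the ellipsoid run queries the oracle only polynomially many times, so the returned matchings span the support of an optimal primal solution; solving the small primal restricted to these polynomially many matchings gives the slot lengths $\vect{t}^\schd{S}$. The main obstacle is exactly this separation step --- correctly reducing ``most violated dual constraint'' to a weighted matching instance and certifying its polynomial solvability. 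An equivalent route that avoids dualizing is to re-express both programs over the link-time vector $\vect{t}$ via Lemma~\ref{lm:schedule-matching}, yielding a polynomial-dimensional LP over the schedule polyhedron $P$ whose odd-set (blossom) constraints \eqref{eq:oddset} are separated in polynomial time by the Padberg--Rao minimum-odd-cut procedure.
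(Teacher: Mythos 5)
Your proposal is correct and follows essentially the same route as the paper's proof: dualize each exponential-variable LP into a polynomial-dimensional LP with one constraint per matching, build a separation oracle by decomposing the node-indexed sums into per-edge weights and running Edmonds' maximum-weight matching, and invoke the Gr\"{o}tschel--Lov\'{a}sz--Schrijver equivalence of separation and optimization to conclude polynomial solvability with the ellipsoid algorithm for both step (i) and step (ii). One harmless slip: your parenthetical claim that each edge has at most one mmBS endpoint is false for mmBS-to-mmBS links, but the per-edge weight remains well-defined---such an edge simply contributes $c_e(y_{v_j}-y_{v_i})$, which is exactly the weight the paper assigns in its oracle.
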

\begin{proof}
  This proof uses a similar technique to the proof of Theorem 2 
  in~\cite{Nemhauser91}, which states that the fractional edge
  coloring can be solved in polynomial time by the ellipsoid
  algorithm. See the details in 
\ifthenelse{\boolean{long}}{Appendix~\ref{sec:proof_MTFS-polynomial-time}}{the extended version~\cite{extended}}.  
\end{proof}

Although polynomial, in practice the ellipsoid algorithm almost always takes longer than the simplex algorithm. In the following, we propose algorithms based on the revised simplex algorithm~\cite{Dantzig55} which does not require the generation of all columns of $\mat{A}^{\set{W}}$. Conceptually, the algorithms first create a feasible schedule. Then in each iteration, to improve the optimization objective, we replace one slot in the schedule by another matching (a set of simultaneous links) while keeping the schedule feasible, until  the optimum is reached.
The maximum weighted matching algorithm~\cite{Edmonds65b} is used to choose the matching (column) to enter the basis (schedule). 

\subsection{Solving the MTFS problem}

To optimize $\theta$, we need an initial basic feasible solution to \eqref{eq:mtf-theta}. Suppose that the backhaul network $G$ is connected, otherwise there are mmBSs unreachable from the eNB. We perform a {\em breath-first-search} (BFS) starting from an arbitrary expanded eNB, say $\text{eNB}_1$. The result is a tree $T$ that spans $\text{eNB}_1$ and all mmBSs. $T$ has exactly $W$ edges. The initial schedule $\schd{S}_0$ is constructed as follows: $\schd{S}_0$ has $W$ slots, each of which contains one link in $T$. Moreover, it is required that the throughputs of all mmBSs  are the same and the schedule takes exactly unit time. It is obvious that the initial solution is unique. We convert the linear
program \eqref{eq:mtf-theta} to the standard form \eqref{eq:mtf-theta_std} by introducing $W$
\emph{surplus variables} $s_i$ as follows.
\begin{IEEEeqnarray}{lrCl}
  \IEEEyesnumber\label{eq:mtf-theta_std}\IEEEyessubnumber*
  \textnormal{minimize} & &\trans{\vect{f}}\vect{x}&
  \\
  \text{subject to}\quad& \mat{U}\vect{x}& = &\vect{g}
  \textnormal{ and }
   \vect{x} \ge \vect{0},
\end{IEEEeqnarray}
where
$
  \mat{U} \eqdef [\mat{U}^1|\mat{U}^2|\mat{U}^3]\eqdef
  \left[\begin{array}{c|c|c}
          \mat{A}^{\set{W}} & -\vect{1} &  -\mat{I}
          \\
          \trans{\vect{1}}  & 0 & \trans{\vect{0}}
        \end{array}\right],
$
$\trans{\vect{f}}=\bigl[\trans{\vect{0}} \,|\, {-1} \,|\, \trans{\vect{0}}\bigr]$,
$\trans{\vect{x}}\eqdef \bigl[\trans{(\vect{t}^\schd{S})}\,|\, \theta
\,|\,\trans{\vect{s}}\bigr]$, and
$\trans{\vect{g}}\eqdef\bigl[\trans{\vect{0}}\,|\,1\bigr]$. 
Alg.~\ref{alg:mtf-theta} shows the computation of the max-min throughput $\theta$.

\begin{algorithm}[!t]
\SetAlgoLined
Set the basis $\mat{B} = \mat{B}_0$ corresponding to the initial schedule $\schd{S}_0$\;
\While{True} {
	Compute the dual variable $\trans{\vect{p}} = \trans{\vect{f}_{\mat{B}}} \mat{B}^{-1} $\;
	Set weight $w_{(v_i, v_j)}$ to each link $(v_i, v_j)$ of $G$ as follows.
\begin{equation*}
  w_{(v_i,v_j)}=
  \begin{cases}
    c_{(v_i,v_j)} (p_j - p_i) & \text{if }v_i, v_j \in \set{W}
    \\
    c_{(v_i,v_j)} p_j         & \text{otherwise }v_i\in\set{R},v_j\in\set{W}
  \end{cases} 
\end{equation*}
Do max weighted matching on $G$. Let the optimal matching be $M$, compute $\eta_1 = -\sum_{e \in M} w_e - p_{W+1}$\label{alg:mtf-a1}\;
Compute $\eta_2 = -1 + \sum_{k = 1}^{W}p_k$\label{alg:mtf-a2}\;
Compute $\eta_3 = \min_{1 \le k \le W} {p_k}$\label{alg:mtf-a3}\;
Compute $\eta = \min(\eta_1, \eta_2, \eta_3)$ and let the corresponding column be $\vect{u}_{\eta} \in \mat{U}$\;
\eIf {$\eta \ge 0$} {
\Return the optimal $\theta$ and $\mat{B}_{\theta} = \mat{B}$\;}
{Update $\mat{B}$ by replacing a column of $\mat{B}$ with $\vect{u}_{\eta}$ according to the simplex algorithm\;}
}
\caption{Compute the max-min throughput $\theta$}
\label{alg:mtf-theta}
\end{algorithm}

The basis $\mat{B}$ is a square matrix that consists of $W+1$ columns from $\mat{U}$. $\vect{f}_{\mat{B}}$ is the elements of $\vect{f}$ corresponding to the basis $\mat{B}$.
The lines \ref{alg:mtf-a1}, \ref{alg:mtf-a2} and \ref{alg:mtf-a3} compute the minimum reduced cost of a column in the matrices $\mat{U}^1, \mat{U}^2$ and $\mat{U}^3$ respectively. 
 To decrease $-\theta$, we need to find a column of $\mat{U}$, $\vect{u}_{k}$ that has negative reduced cost $f_k - \trans{\vect{p}} \vect{u}_k < 0$ to enter the basis. In the algorithm, we find the column $\vect{u}_{\eta}$ in $\mat{U}$ that produces the minimum reduced cost $\eta$. If $\eta \ge 0$, then no columns can be used to decrease $-\theta$, thus we have reached the optimum. 

Let the final basis in computing $\theta$ be $\mat{B}_{\theta}$. To directly use $\mat{B}_{\theta}$ as the initial basis to the solution of step (ii) of the MTFS problem, we add an artificial scalar variable $y \ge 0$ to \eqref{eq:mtf} and replace the constraint $\mat{A}^{\set{W}}  \vect{t}^{\schd{S}} \ge \theta \vect{1}$ with $\mat{A}^{\set{W}}  \vect{t}^{\schd{S}} - \vect{1} y \ge \theta \vect{1}$. 
Since $\theta$ is the max-min throughput, the feasible $y$ must be 0. Hence, the optimal solution (maximum network throughput) to \eqref{eq:mtf} is unaffected.
Again, we convert \eqref{eq:mtf} into the standard form of \eqref{eq:mtf-theta_std}, which is solvable with the revised simplex algorithm. 

In the standard form, $\mat{U}$ remains unchanged,
we redefine
$\trans{\vect{f}} \eqdef \bigl[-\trans{\vect{c}} \,|\, 0 \,|\,
\trans{\vect{0}}\bigr]$, 
$\trans{\vect{x}}\eqdef \bigl[\trans{(\vect{t}^\schd{S})}\,|\, y
\,|\,\trans{\vect{s}}\bigr]$, and
$\trans{\vect{g}}\eqdef\bigl[\theta\trans{\vect{1}}\,|\,1\bigr]$.
The optimization algorithm is similar to
Alg.~\ref{alg:mtf-theta} and is outlined in Alg.~\ref{alg:mtf}. 
Since the basis $\mat{B}$ is a square matrix of $W+1$
dimension, it follows that the optimal schedule $\schd{S}^*$ contains
no more than $W+1$ slots. Additionally, since the links on a 
flow from the eNB to a destination mmBS may not be scheduled in sequential
order, some transmission opportunities of the flow in the first few frames may be wasted. Therefore, maximum throughput is achieved in the long-term.

\begin{algorithm}[!t]
\SetAlgoLined
Set the basis $\mat{B} = \mat{B}_{\theta}$\;
\While{True} {
	Compute the dual variable $\trans{\vect{p}} = \trans{\vect{f}_{\mat{B}}} \mat{B}^{-1}$\;
	Set weight $w_{(v_i, v_j)}$ to each link $(v_i, v_j)$ of $G$ as follows. 
\begin{equation*}
  w_{(v_i, v_j)} = 
  \begin{cases}
    c_{(v_i,v_j)} (p_j - p_i) & \text{if }v_i,v_j\in\set{W} \\
    c_{(v_i,v_j)} (p_j + 1)   & \text{otherwise }v_i\in\set{R},v_j\in\set{W}
  \end{cases}
\end{equation*}
 Do max weighted matching on $G$. Let the optimal matching be $M$, compute $\eta_1 = -\sum_{e \in M} w_e - p_{W+1}$\;
	$\eta_2 =  \sum_{k = 1}^{W}p_k$\;
	$\eta_3 = \min_{1 \le k \le \set{W}} {p_k}$\;
	Compute $\eta = \min(\eta_1, \eta_2, \eta_3)$ and let the corresponding column be $\vect{u}_{\eta} \in \mat{U}$\;
	\eIf {$\eta \ge 0$} 
	{\Return the optimal schedule $\schd{S}^*$ corresponding to $\mat{B}$\;}
	{Update $\mat{B}$ by replacing a column of $\mat{B}$ with $\vect{u}_{\eta}$\;}

}
\caption{Solving the MTFS problem}
\label{alg:mtf}
\end{algorithm}

\subsection{Generalization}
The scheduled-oriented optimization method illustrated by the optimal MTFS algorithm is quite general. 
It can solve any scheduling problem that can be formulated as a linear program whose variables are link times and QoS metrics.
For example, it can optimize for the constraint that each mmBS has a minimum throughput requirement.
Another example is that the proposed method can optimize the energy consumption as it can be translated into the minimization of total transmission time in a schedule. We do not further elaborate on them due to the space limitation. Moreover, in Sec.~\ref{s:extend}, we extend the optimization method to backhaul and access networks, as well as to multi-RF chains at each node.

\section{Edge-coloring based approximation algorithm} \label{s:approx-algo}
In Sec.~\ref{s:sched-fair}, we proposed an optimal joint routing and scheduling algorithm for mmWave backhaul networks. Although it is optimal, it may have a high runtime (c.f. the evaluation in Sec.~\ref{s:eval}) when the number of mmBS nodes is large. Hence, we propose a run-time efficient {\em edge-coloring (EC) based approximation algorithm} that has a provable performance bound.
The EC algorithm follows a two-step approach of (i) computing the link time and (ii) scheduling within unit time. 

\subsection{Step (i): computing link time} \label{ss:ec-step1}
To precisely compute the link time, we need to include all the constraints of the schedule polyhedron \eqref{eq:link-active-time}.
Since the number of odd set constraints \eqref{eq:oddset} is huge, which leads to a high runtime for the optimization, instead we use a small set of constraints that is a necessary but not sufficient condition for a feasible unit time schedule. 
The selection of the new set of constraints is based on the following observation.

Let $G = (\set{V}, \set{E})$ be the backhaul network before the eNB expansion (Sec.~\ref{ss:e-v}) and $G^{\set{W}}$ be the subgraph of $G$, which contains only the mmBSs $\set{W}$ and the links among them. We define $\nu = \left \lfloor \frac{W}{2} \right \rfloor$ as an upper bound of the maximum number of mmBS-to-mmBS links that can be active simultaneously\footnote{The active links form a matching in $G^{\set{W}}$. According to the definition of matching, the number of active links is upper-bounded by  $\bigl\lfloor\frac{W}{2}\bigr\rfloor$.}.
We assume the number of RF chains at the eNB satisfies $R \le L$, where $L$ is the number of mmBSs that are directly connected to the eNB, because $L$ RF chains is enough to serve the mmBSs. We have the following observation.
\begin{obs} 
If $k$ mmBS-to-mmBS links are active at a time $t$, then at most $\min(R, W - 2k)$ eNB-to-mmBS links can be active at $t$, each using one RF chain of the eNB. Hence, at least $R - \min(R, W - 2k) = \max(0, R-W + 2k)$ RF chains of the eNB are idle at $t$. 
\label{obs:idle-RF}
\end{obs}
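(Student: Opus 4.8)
The plan is to establish the two resource bounds on the number of active eNB-to-mmBS links separately and then combine them, after which the idle-chain count is a one-line algebraic identity. First I would recall the structural fact established in Sec.~\ref{s:preliminary}: the set of links active at any instant forms a matching in the expanded graph; equivalently, before expansion, every mmBS---having a single RF chain and being half-duplex---participates in at most one active link, while the eNB participates in at most $R$. This is the only property I need.

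Next I would count available mmBS receivers. Since the $k$ active mmBS-to-mmBS links pairwise share no endpoint (matching property), they occupy exactly $2k$ distinct mmBS nodes in $\set{W}$, each of which is then busy and cannot simultaneously receive from the eNB. Hence only the remaining $W - 2k$ mmBS nodes are available as receivers of eNB-to-mmBS (downlink) links, and because each such link has a distinct mmBS receiver (again by the half-duplex property), the number of active eNB-to-mmBS links is at most $W - 2k$. I would then count RF chains: each active eNB-to-mmBS link consumes one of the eNB's $R$ RF chains, so the number of such links is also at most $R$. Combining these two independent bounds yields the stated $\min(R, W - 2k)$.

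Finally, the idle-RF-chain count is immediate: the eNB uses $\min(R, W - 2k)$ of its $R$ chains, so $R - \min(R, W - 2k)$ are idle, and the elementary identity $a - \min(a, b) = \max(0, a - b)$ with $a = R$ and $b = W - 2k$ gives $\max(0, R - W + 2k)$.

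I do not anticipate a genuine obstacle: the argument is a direct double-counting of the two scarce resources, namely free mmBS receivers and eNB RF chains. The one place to be careful is justifying that the $k$ mmBS-to-mmBS links truly consume $2k$ \emph{distinct} nodes and that distinct eNB-to-mmBS links have distinct receivers disjoint from those $2k$ nodes; both follow from the single-RF-chain/half-duplex assumption, so no node is double-counted and the node-availability bound $W - 2k$ is valid. The assumption $R \le L$ plays no role in the bound itself and is merely a modeling convention, so I would not invoke it.
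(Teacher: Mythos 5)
Your proof is correct and matches the reasoning the paper intends: the paper states this as an Observation without an explicit proof, relying precisely on the counting argument you spell out (the accompanying footnote's remark that active mmBS-to-mmBS links form a matching in $G^{\set{W}}$ is exactly your ``$2k$ distinct nodes'' step). Your double bound via free mmBS receivers and eNB RF chains, plus the identity $R-\min(R,W-2k)=\max(0,R-W+2k)$, is the intended argument, and you are also right that the assumption $R\le L$ is not needed for the bound.
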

 For a schedule of unit time, we define $t_k'$ as the time in which exactly $k$ mmBS-to-mmBS links are active. Each feasible schedule should be subject to the following constraints.
\begin{IEEEeqnarray}{rCl}
  \IEEEyesnumber\label{eq:ec-step1}\IEEEyessubnumber*
  \sum_{k=1}^{\nu}t_k'& \le &1,\label{eq:ec-12} 
  \textnormal{  and  }	t_k' \ge 0\quad\forall\,k\in\{1,2,...,\nu\}\label{eq:ec-11}
  \\
  \sum_{k=1}^{\nu}k\cdot t_k'& = &
  \sum_{e\in\{(v,v')\in\set{E}\colon v,v'\in\set{W}\}}t_e\label{eq:ec-13}
  \\
  \sum_{e\in\delta(\text{eNB})}t_e& \le &
  R-\sum_{k = 1}^{\nu}\max(0,R-W+2k)t_k'\label{eq:ec-14}
  \\  	
  \sum_{e\in\delta(v)} t_e& \le &1\quad\forall\,v \neq \text{eNB},
  \textnormal{  and  }
  t_e  \ge 0\quad\forall\,e \in\set{E}.\label{eq:ec-15}
\end{IEEEeqnarray}
\eqref{eq:ec-13} formulates the total mmBS-to-mmBS transmission time in terms of the variables $t_k'$ and $t_e$ (link time of $e$), respectively. \eqref{eq:ec-14} shows that the total eNB-to-mmBS transmission time should be no more than $R$ minus the minimum idle time of the RF chains at the eNB. \eqref{eq:ec-15} expresses the single RF chain constraint on mmBSs. 

We substitute the precise constraint set to link times \eqref{eq:link-active-time} with the constraints in \eqref{eq:ec-step1}. The advantage is the low runtime and small memory complexity of linear programming due to the following reason.
The total number of constraints in \eqref{eq:link-active-time} and \eqref{eq:ec-step1} are $O(2^{W+R})$ (exponential) and $O(W^2)$ (polynomial), respectively.
Moreover, with the polynomial number of constraints, the computation of link time can be carried out by off-the-shelf linear optimization tools. However, the link time vector that satisfies \eqref{eq:ec-step1} may be infeasible in unit time, because satisfying the constraints in  \eqref{eq:ec-step1} is a necessary but not sufficient condition for a feasible unit time schedule.

Specifically, for the MTFS problem, the linear program for computing the max-min throughput $\theta$  is
\begin{IEEEeqnarray}{lcCrl}
\label{eq:ec-step1-theta}
\IEEEyesnumber \IEEEyessubnumber*
  \textnormal{maximize} & \theta & & & \\
  \textnormal{subject to}\quad&\sum_{e\in\delta^-(v)}c_e t_e
  -\sum_{e\in\delta^+(v)}c_e t_e& \ge &\theta
  \quad\forall\,v\in\set{W}\IEEEeqnarraynumspace\label{eq:ec-step1-theta1}
  \\
  &\text{and \eqref{eq:ec-step1}},& & &\nonumber
\end{IEEEeqnarray}
where $\delta^-(v)$ and $\delta^+(v)$ are the set of links coming into node $v$ and the set of links leaving $v$, respectively.

With the optimal $\theta$, we compute the link time for the MTFS.
\begin{IEEEeqnarray}{lll}
  \textnormal{maximize} &\sum_{e\in\delta^+(\text{eNB})}c_e t_e&
  \nonumber\\
  \textnormal{subject to}\quad&\textnormal{constraints in \eqref{eq:ec-step1-theta}.}&
  \IEEEeqnarraynumspace\label{eq:ec-step1-tput}
\end{IEEEeqnarray}

\subsection{Step (ii): edge-coloring based scheduling}
After we obtain the link time, the next step is to generate a unit time schedule. The approximation algorithm  is based on the idea of edge-coloring of multigraphs (graphs allowing multiple edges between two nodes). A proper edge-coloring assigns a color to each edge in a graph such that any two adjacent edges (sharing one or two common nodes) are assigned different colors. Obviously, the set of edges  $\set{E}_{\lambda}$ of a color $\lambda$ must be a matching. Hence, $\set{E}_\lambda$ corresponds to a slot and an edge coloring scheme corresponds to a schedule. The EC-based scheduling takes a parameter {\em granularity} $t^g \in (0, 1]$, which is the quantization of the link time. 
A smaller $t^g$ typically leads to better schedules at the cost of longer runtime. Alg.~\ref{alg:ec-sched} shows the process of the EC-based scheduling.
\begin{algorithm}[!t]
	\textbf{Reduce graph}. Given the link time vector $\vect{t}$, we remove edges in $G$ with zero link time and call the subgraph $G_r$\;
	\textbf{Expand eNB}. We perform the eNB expansion on $G_r$. Let the expanded eNBs be $\text{eNB}_1,...,\text{eNB}_R$.
For a given link $(\text{eNB}, v)$ in $G_r$ with link time $t_{(\text{eNB}, v)}$, we set the link time of the links $(\text{eNB}_k, v), k=1,...,R$ to $\frac{t_{(\text{eNB}, v)}}{R}$. We call the graph after eNB expansion $G_v = (\set{V}_v, \set{E}_v)$\;
	
	\textbf{Create multigraph and assign link time}. We create the
  coloring graph $G_m=(\set{V}_m,\set{E}_m)$, which has the same
  vertex set as $\set{V}_v$, and its edges is defined as follows. For each $e\in \set{E}_v$ between two nodes $v$ and $v'$ with link time $t_e$, we install $\bigl\lceil\frac{t_e}{t^g}\bigr\rceil$ edges
  between $v$ and $v'$ in $G_m$. Among these edges, $\bigl\lceil\frac{t_e}{t^g}\bigr\rceil-1$ edges are
  assigned $t^g$ link time, and the left edge is
  assigned $\bmod(t_e,t^g)$ link time ($\bmod$ is the modulo operation)\;
  
  \textbf{Coloring and scheduling}. We perform edge coloring on
  $G_m$. Suppose that $G_m$ can be edge-colored with $\kappa$
  colors. For those edges colored by the $i$-th color,
  $i=1,\ldots,\kappa$, we schedule the corresponding links in the $i$-th slot\ (a slot has the length $t^g$);

  \textbf{Scale}. The schedule is now of length $\kappa t^g$. If
  $\kappa t^g > 1$, we scale the total time length with the factor
  $\frac{1}{\kappa t^g}$\;

\caption{EC-based scheduling.}
\label{alg:ec-sched}
\end{algorithm}

\subsection{Performance analysis of the EC-based scheduling}

The following lemma shows that Alg.~\ref{alg:ec-sched} (step (ii) of the EC algorithm) reduces the performance metric $\mu~$\footnote{$\mu$ is the optimization goal such as throughput, energy consumption, etc.}  and link times $t_e$ by a factor of $\frac{1}{\kappa t^g}$, if the $\kappa t^g > 1$. Therefore, a high quality edge-coloring heuristic (small $\kappa$)~\cite{Nakano95} and a small $t^g$ improve the schedule performance.

\begin{lem} Suppose that after step (i) of the EC algorithm, each link $e$ of $G_v$ has link time $t_e$, and the performance metric
is $\mu \ge 0$. Moreover, assume that if a schedule is scaled by $\rho > 0$, then $\mu$ is also scaled by $\rho$. Therefore, after step (ii), the final link time is $t_e' = \min(\frac{t_e}{\kappa t^g}, t_e)$ and the final performance metric is $\mu' = \min(\frac{\mu}{\kappa t^g}, \mu)$.
\label{lem:ec-quality}
\end{lem}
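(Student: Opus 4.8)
The plan is to track how each link time $t_e$ and the metric $\mu$ propagate through the operative parts of Alg.~\ref{alg:ec-sched} (multigraph creation, coloring/scheduling, and scaling) and to show that the whole of step (ii) amounts to multiplying every link time and $\mu$ by one common factor $\rho = \min\!\bigl(\frac{1}{\kappa t^g},1\bigr)$. Once this is established, the claimed formulas follow immediately: by the homogeneity hypothesis $t_e' = \rho\,t_e$ and $\mu' = \rho\,\mu$, and since $t_e\ge 0$ and $\mu\ge 0$, the nonnegative factor $\rho$ distributes over the $\min$ defining it.

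First I would verify that the schedule produced just before the \textbf{Scale} step already realizes the link times $t_e$ that enter step (ii). For each edge $e$ of $G_v$ the construction installs $\bigl\lceil t_e/t^g\bigr\rceil$ parallel copies in $G_m$; in any proper edge-coloring parallel edges must receive distinct colors, so $e$ is active in exactly $\bigl\lceil t_e/t^g\bigr\rceil$ distinct slots. Summing the link times assigned to these copies, namely $\bigl(\lceil t_e/t^g\rceil-1\bigr)\,t^g + \bmod(t_e,t^g)$, recovers $t_e$, so the pre-scaling schedule gives link $e$ total active time $t_e$. Because $\mu$ is, by the step~(i) formulation, the performance metric associated with precisely these link times, the pre-scaling schedule also attains metric $\mu$.

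Next I would pin down the length of the pre-scaling schedule: the coloring uses $\kappa$ colors, each color class occupies one slot of length $t^g$, so the total length is $\kappa t^g$. The \textbf{Scale} step then multiplies the schedule by $\frac{1}{\kappa t^g}$ when $\kappa t^g > 1$ and leaves it unchanged otherwise, i.e.\ by $\rho = \min\!\bigl(\frac{1}{\kappa t^g},1\bigr)$. Scaling by $\rho$ scales every link time by $\rho$ and, by hypothesis (which applies since $\rho>0$ in both cases), scales $\mu$ by $\rho$. Hence $t_e' = \rho\,t_e = \min\!\bigl(\frac{t_e}{\kappa t^g},t_e\bigr)$ and $\mu' = \rho\,\mu = \min\!\bigl(\frac{\mu}{\kappa t^g},\mu\bigr)$, the final equalities moving the nonnegative factor inside the $\min$.

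The main obstacle is the bookkeeping in the preservation claim of the second step: confirming that the fractional ``remainder'' copy carrying $\bmod(t_e,t^g)$ together with the full-length copies exactly reconstitutes $t_e$, including the boundary case where $t_e$ is an integer multiple of $t^g$ (there the remainder copy must be read as carrying a full $t^g$ rather than $0$). Everything else is routine: with link-time preservation and the pre-scaling length $\kappa t^g$ in hand, the conclusion is a one-line homogeneity-and-distributivity argument.
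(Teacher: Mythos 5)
Your proof is correct and takes essentially the same route as the paper's: the schedule produced by the coloring has length $\kappa t^g$ and realizes the step-(i) link times and metric, and the \textbf{Scale} step multiplies both by $\min\bigl(\frac{1}{\kappa t^g}, 1\bigr)$, which the paper simply presents as a two-case analysis ($\kappa t^g \le 1$ versus $\kappa t^g > 1$). Your additional bookkeeping --- checking that the parallel copies in $G_m$ sum back to $t_e$, including the boundary case where $\bmod(t_e, t^g) = 0$ --- is detail the paper leaves implicit, and does not change the argument.
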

\begin{proof}
	If $\kappa t^g \le 1$, then $G_v$ can be scheduled in unit time, and $t_e' = t_e$ and $\mu' = \mu$.
On the other hand, if $\kappa t^g > 1$, then $G_v$ needs $\kappa t^g$ time to schedule. To fit in the unit time schedule, we perform the scaling. Afterwards, $t_e' = \frac{t_e}{\kappa t^g}$ and $\mu' = \frac{\mu}{\kappa t^g}$.
\end{proof}

Since minimum edge coloring of an arbitrary graph is NP-complete~\cite{Holyer81}, we have to employ approximation algorithms. We choose a simple multigraph edge-coloring algorithm by Karloff et al~\cite{Karloff87}. It uses at most $3\left \lceil \Delta(G)/2 \right \rceil$ colors, where $\Delta(G)$ is the maximal node degree of a multigraph $G$. The following lemma gives the upper bounds on $\Delta(G_m)$, the number of vertices $|\set{V}_m|$ and edges $|\set{E}_m|$ of $G_m$.
\begin{lem}
$\Delta(G_m) \le W + R + {1} / {t^g} -1$. $|\set{V}_m| \le W + R$ and $|\set{E}_m| < \frac{1}{2} \left( W^2 + (2R-1)W + \frac{W + R}{t^g} \right)$.
\label{lem:ec-bounds}
\end{lem}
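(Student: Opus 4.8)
The plan is to bound each of the three quantities directly from the construction of $G_m$ in Alg.~\ref{alg:ec-sched}, using the single structural fact that every vertex of $G_v$ carries at most unit incident link time. I would first record this fact as a preliminary. For each mmBS $v$, the eNB expansion in step (2) replaces the single link $(\text{eNB},v)$ of time $t_{(\text{eNB},v)}$ by $R$ links $(\text{eNB}_k,v)$ of time $t_{(\text{eNB},v)}/R$ each, so the total incident time at $v$ is unchanged, namely $\sum_{e\in\delta(v)}t_e\le 1$ by \eqref{eq:ec-15}. For each expanded eNB $\text{eNB}_k$ the incident time is $\frac{1}{R}\sum_{e\in\delta(\text{eNB})}t_e\le 1$, using \eqref{eq:ec-14}. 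Hence $\sum_{e\in\delta(v)}t_e\le 1$ for \emph{every} $v\in\set{V}_v$. The vertex bound is then immediate, since $\set{V}_m=\set{V}_v$ consists of the $R$ expanded eNBs together with at most $W$ mmBSs, giving $|\set{V}_m|\le W+R$.

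For the maximum degree I would compute the degree of a vertex $v$ in the multigraph. Since step (3) installs $\lceil t_e/t^g\rceil$ parallel edges per link $e$, we have $\deg_{G_m}(v)=\sum_{e\in\delta(v)}\lceil t_e/t^g\rceil$. Using $\lceil x\rceil\le x+1$,
\[
\deg_{G_m}(v)\le \frac{1}{t^g}\sum_{e\in\delta(v)}t_e+|\delta(v)|\le \frac{1}{t^g}+|\delta(v)|.
\]
Because $G_v$ has at most $W+R$ vertices and is simple, any vertex has at most $W+R-1$ incident links, so $|\delta(v)|\le W+R-1$, yielding $\Delta(G_m)\le W+R-1+1/t^g$.

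For the edge count I would sum over all links, now using the strict estimate $\lceil x\rceil< x+1$: $|\set{E}_m|=\sum_{e\in\set{E}_v}\lceil t_e/t^g\rceil<\frac{1}{t^g}\sum_{e\in\set{E}_v}t_e+|\set{E}_v|$. The first sum is controlled by a handshake double count, $2\sum_{e\in\set{E}_v}t_e=\sum_{v\in\set{V}_v}\sum_{e\in\delta(v)}t_e\le|\set{V}_v|\le W+R$, so $\sum_{e}t_e\le(W+R)/2$. The second term is bounded by the bipartite-plus-clique structure of $G_v$: at most $RW$ eNB-to-mmBS links (one per expanded-eNB/mmBS pair), at most $\binom{W}{2}$ mmBS-to-mmBS links, and none among the expanded eNBs, so $|\set{E}_v|\le RW+\binom{W}{2}=\tfrac12(W^2+(2R-1)W)$. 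Adding the two pieces gives exactly
\[
|\set{E}_m|<\tfrac12\Bigl(W^2+(2R-1)W+\tfrac{W+R}{t^g}\Bigr).
\]

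The routine parts are the ceiling estimate and the handshake count; the step needing the most care is the preliminary fact that $\sum_{e\in\delta(v)}t_e\le 1$ survives the eNB expansion, since the relaxed set \eqref{eq:ec-step1} only bounds the \emph{aggregate} eNB incidence by $R$, and one must check that the $1/R$ splitting redistributes this into unit incidence at each expanded eNB while leaving every mmBS incidence intact. The other delicate point is counting $|\set{E}_v|$ correctly, i.e.\ arguing that the relevant graph is simple on the $W+R$ vertices so the mmBS-to-mmBS links contribute $\binom{W}{2}$ rather than $W(W-1)$; this is precisely what forces the coefficient $\tfrac12$ on $W^2$ and makes the three bounds come out sharp.
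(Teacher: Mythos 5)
Your route is the same as the paper's: bound $\card{\set{V}_m}$ directly, bound $\Delta(G_m)$ via $\lceil t_e/t^g\rceil \le t_e/t^g+1$ together with unit incident link time, and bound $\card{\set{E}_m}$ by summing the same estimate (you sum over edges with a handshake count; the paper sums vertex degrees via the degree-sum formula, which is the identical computation). On one point you are more explicit than the paper: you verify that the $1/R$ splitting in the eNB-expansion step converts the aggregate constraint \eqref{eq:ec-14} into unit incident time at each expanded eNB, a fact the paper uses silently.

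There is, however, one genuine gap: you assert that $G_v$ is simple, and both of your sharp counts hinge on it ($\card{\delta(v)}\le W+R-1$ for the degree bound, and at most $\binom{W}{2}$ mmBS-to-mmBS links for the edge bound), but you never prove it --- you only flag it as ``delicate.'' It is not automatic. The network is a directed graph in which two mmBSs may be joined by anti-parallel links, and nothing in the step-(i) linear programs \eqref{eq:ec-step1-theta} and \eqref{eq:ec-step1-tput} forbids an optimal solution from putting positive time on both; the incident-time budget does not save you, because both links can carry arbitrarily small positive times and still contribute one edge of $G_m$ each. If such pairs survive the graph-reduction step, an mmBS can have up to $2(W-1)+R$ incident links in $G_v$, and the claimed bounds can genuinely fail (take $t^g=1$ and $W\ge 3$). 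The paper closes exactly this hole with the observation that directed cycles --- in particular anti-parallel pairs --- can be eliminated from the link-time solution by shortening the times around the cycle so that the same data volume is removed from each link involved: this preserves the net throughput at every node, only decreases the left-hand sides of the constraints in \eqref{eq:ec-step1} (after decreasing the auxiliary variables $t_k'$ so that the equality \eqref{eq:ec-13} still holds), and leaves the objective untouched, since the objective involves only eNB-to-mmBS links. Hence one may assume without loss of generality that $G_r$, and thus $G_v$, has at most one link between any pair of nodes. Insert that argument where you wrote ``is simple'' and your proof is complete.
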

\begin{proof}
	$G_v$ contains $W + R$ nodes due to the eNB expansion. For a expanded eNB node, the maximum degree is no more than $W$. For a mmBS, the maximum degree is no more than $W-1 + R$ since the directed graph $G_v$ contains no cycles as all cycles can be eliminated by shortening the link time. So $\Delta(G_v) \le W + R - 1$. $G_m$ is transformed from $G_v$ by installing $\bigl\lceil\frac{t_e}{t^g}\bigr\rceil$ edges for each edge $e$ in $G_v$. Therefore, the degree of a node $v$ in $G_m$ is 
\begin{equation*}
	\deg(v) = \sum_{\substack{e \in \delta(v) \\ v \in \set{V}_v}} \left\lceil \frac{t_e}{t^g} \right\rceil
	< \sum_{\substack{e \in \delta(v) \\ v \in \set{V}_v}} \left( \frac{t_e}{t^g} + 1 \right) \le \frac{1}{t^g} + W + R - 1 
\end{equation*}
In addition, we have
$
	|\set{V}_m| = |\set{V}_v| \le W + R
$.
Using the \emph{degree sum formula}, the number of edges in $G_m$ is
\begin{IEEEeqnarray*}{rCl}
  \card{\set{E}_m}& = &\frac{1}{2}\sum_{v\in\set{V}_m}\deg(v)
  <\frac{1}{2}\sum_{v\in\set{V}_v}\sum_{e\in\delta(v)}\left(\frac{t_e}{t^g}+1\right)
  \\
  & \le &\frac{1}{2}\left(W^2+(2R-1)W+\frac{W+R}{t^g}\right). \qedhere
\end{IEEEeqnarray*}
\end{proof}

Since $R \le W$ ($W$ RF chains is sufficient to serve all mmBSs), from the above Lemma, we have $\Delta(G_m) = O(W + \frac{1}{t^g})$. $|\set{V}_m| = O(W)$ and $|\set{E}_m| = O(W^2+\frac{W}{t^g})$.

In the following, we show the quality and time complexity of the step (ii) of the EC algorithm. Practically, step (i) is always much faster than step (ii).
\begin{thm}
	Let the performance metric after step (i) be $\mu \ge 0$. Then step (ii) achieves the performance metric $\mu' > \frac{2}{3[(W+R+1)t^g + 1]} \mu$ and it has time complexity of $O(\bigl[W^2 + \frac{W}{t^g}\bigr]\log(W+\frac{1}{t^g}))$.
\label{thm:ec-perf}
\end{thm}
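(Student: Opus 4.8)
The plan is to prove the two assertions separately, in both cases by feeding the structural bounds of Lemma~\ref{lem:ec-bounds} into the guarantees of the Karloff et al.\ edge-coloring routine~\cite{Karloff87} and then invoking Lemma~\ref{lem:ec-quality}.

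For the performance bound I would start from Lemma~\ref{lem:ec-quality}, which states that Step (ii) produces $\mu' = \min(\mu/(\kappa t^g), \mu)$, where $\kappa$ is the number of colors used on $G_m$. The only nontrivial regime is $\kappa t^g > 1$, where $\mu' = \mu/(\kappa t^g)$, so it suffices to upper bound $\kappa t^g$. Since the Karloff et al.\ algorithm uses $\kappa \le 3\lceil \Delta(G_m)/2\rceil$ colors and Lemma~\ref{lem:ec-bounds} gives $\Delta(G_m) \le W + R + 1/t^g - 1$, I would substitute and apply $\lceil x\rceil < x+1$ to obtain $\kappa < 3\bigl(\tfrac{1}{2}(W+R+1/t^g-1)+1\bigr)$. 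Multiplying through by $t^g$ and collecting terms, the $1/t^g$ inside $\Delta(G_m)$ cancels the $t^g$ factor and leaves the additive constant $\tfrac32$, while the $+1$ from the ceiling contributes $3t^g$, which merges with $\tfrac32(W+R-1)t^g$ into $\tfrac32(W+R+1)t^g$; altogether $\kappa t^g < \tfrac32\bigl[(W+R+1)t^g+1\bigr]$. Inserting this into $\mu' = \mu/(\kappa t^g)$ yields exactly $\mu' > \tfrac{2}{3[(W+R+1)t^g+1]}\mu$. In the complementary regime $\kappa t^g \le 1$ we have $\mu' = \mu$, and the bound still holds since its prefactor is strictly below $1$ (as $(W+R+1)t^g+1 > 1 > 2/3$).

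For the time complexity I would invoke the running time of the Karloff et al.\ routine, which edge-colors a multigraph in $O(|\set{E}|\log\Delta)$ time through recursive Euler-tour splitting (roughly $O(\log\Delta)$ levels, each linear in the number of edges). Substituting the bounds $|\set{E}_m| = O(W^2 + W/t^g)$ and $\Delta(G_m) = O(W + 1/t^g)$ from Lemma~\ref{lem:ec-bounds} (using $R \le W$) gives directly $O\bigl([W^2 + W/t^g]\log(W+1/t^g)\bigr)$. The preprocessing in Alg.~\ref{alg:ec-sched} (reducing $G$, expanding the eNB, and installing $\lceil t_e/t^g\rceil$ parallel edges) together with the final scaling is each linear in $|\set{E}_m|$, hence dominated by the coloring cost.

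The main obstacle is the algebraic collapse in the performance bound: I must verify that the $1/t^g$ term of $\Delta(G_m)$ disappears into the constant $1$ of the denominator while the ceiling overhead is absorbed cleanly into the coefficient of $(W+R+1)$, so that the stated closed form emerges with a strict inequality. A secondary point is pinning down the precise complexity guarantee of the chosen heuristic; I would confirm that~\cite{Karloff87} delivers the $O(|\set{E}|\log\Delta)$ figure rather than an $O(|\set{E}|\Delta)$ one, since only the former justifies the logarithmic factor in the theorem.
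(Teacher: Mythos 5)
Your proof is correct and follows essentially the same route as the paper's: Karloff's $3\lceil\Delta(G_m)/2\rceil$ color bound combined with Lemma~\ref{lem:ec-bounds} to obtain $\kappa t^g < \frac{3}{2}\bigl[(W+R+1)t^g+1\bigr]$, then Lemma~\ref{lem:ec-quality} to convert this into the performance bound, and the size bounds $|\set{E}_m| = O(W^2+\frac{W}{t^g})$, $\Delta(G_m)=O(W+\frac{1}{t^g})$ for the runtime. The one point you flagged as needing confirmation---that \cite{Karloff87} achieves $O(|\set{E}|\log\Delta)$ rather than $O(|\set{E}|\Delta)$---is handled in the paper by observing that Karloff's algorithm has the same cost as perfect edge coloring of a bipartite graph of degree $\lceil\Delta(G_m)/2\rceil$ with $O(|\set{E}_m|)$ edges, which runs in $O(|\set{E}|\log\Delta)$ time by~\cite{Cole01}.
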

\begin{proof}
	The Karloff's algorithm uses $\kappa \le 3 \left\lceil \Delta(G_m) /{2} \right \rceil$ colors. Due to Lemma~\ref{lem:ec-bounds}, 
$
	\kappa \le 3 \left \lceil {(W+R+\frac{1}{t^g}-1)} / {2} \right \rceil.
$
Hence,
\begin{IEEEeqnarray*}{rCl}
  \kappa t^g& < &3t^g\biggl(\frac{W+R+\frac{1}{t^g}-1}{2}+1\biggr)
  =\frac{3}{2}(W + R + 1)t^g + \frac{3}{2}.
\end{IEEEeqnarray*} 
According to Lemma~\ref{lem:ec-quality}, the final performance metric
\begin{equation*}
	\mu' = \min(\frac{1}{\kappa t^g}, 1) \mu 
	> \frac{2}{3[(W+R+1)t^g + 1]} \mu
\end{equation*}
The time complexity of step (ii) is determined by Karloff's edge coloring algorithm,
which has the same time complexity as the perfect edge coloring of a bipartite graph of maximum degree $\lceil \Delta(G_m)/2 \rceil$ and number of edges $O(|\set{E}_m|)$. Since the time complexity of the perfect edge coloring of a graph with $|\set{E}|$ edges and degree $\Delta$ is $O(|\set{E}| \log \Delta)$~\cite{Cole01},
the time complexity of step (ii) is $O(|\set{E}_m| \log (\lceil \Delta(G_m)/2 \rceil)) = O(\bigl[\frac{W}{t^g}+W^2\bigr]\log(W+\frac{1}{t^g}))$.
\end{proof}
For the MTFS problem, let the optimal max-min throughput be $\theta^*$. Since  step (i) of the EC algorithm uses a looser constraint set than the precise set of the schedule polyhedron, it gives a $\theta \ge \theta^*$. From Theorem~\ref{thm:ec-perf}, we have that the final max-min throughput of the EC algorithm $\theta' > \frac{2}{3[(W+R+1)t^g + 1]} \theta^*$.

In a typical backhaul network, we have $W \gg R$, so $\frac{2}{3[(W+R+1)t^g + 1]} \approx \frac{2}{3(Wt^g + 1)}$. This means to keep a constant performance quality, we can choose $t^g$ to be inversely proportional to $W$, i.e., a bigger network requires a smaller $t^g$. If $t^g$ is so selected, then the time complexity is $O(W^2 \log (W)$, which is quite scalable with the number of mmBSs, and thus feasible at the eNB in practice.
Moreover, by setting the granularity $t^g \rightarrow 0$, the performance metric approaches $\mu' > \frac{2}{3} \mu$. For the MTFS problem, this means $\theta' > \frac{2}{3} \theta^*$.

\section{Extension to more general scenarios} \label{s:extend}
In this section, we show that our schedule-oriented optimization method proposed in Sec.~\ref{s:sched-fair} and the edge-coloring based approximation algorithm in Sec.~\ref{s:approx-algo} can be extended to more general scenarios of (i) backhaul and access networks and (ii) multiple RF chains at each node.

\subsection{Extension to backhaul and access networks}
The backhaul and access networks add an additional layer of UEs to the backhaul networks. Each UE has a single RF chain and is allowed to have links with one or more mmBSs.
Let $\set{U}$ denote the set of UEs. For the downlink traffic, only one-directional links from mmBSs to UEs exist.

We illustrate as an example the solution of the MTFS problem.
The max-min fairness in throughput is now defined for the UEs, as they are the destinations.
Let $G$ be the backhaul and access network after eNB expansion, and $\mat{A}$ be the node-matching matrix of $G$. We define $\mat{A}^{\set{W}}$ and $\mat{A}^{\set{U}}$ as the submatrices of $\mat{A}$ related to the nodes in $\set{W}$ (mmBSs) and in $\set{U}$ (UEs), respectively. The linear program \eqref{eq:mtf-theta} to compute max-min throughput $\theta$ needs to be modified as follows: \eqref{eq:mtf-theta1} should be replaced by the constraints of \eqref{eq:mtf-theta-ba}.
\begin{IEEEeqnarray*}{rCl}
  \IEEEyesnumber\label{eq:mtf-theta-ba}
	\mat{A}^{\set{U}}\vect{t}^\schd{S}& \ge &\theta\vect{1}
  \textnormal{ and }
  \mat{A}^{\set{M}}\vect{t}^\schd{S} = \vect{0}
\end{IEEEeqnarray*}
here, \eqref{eq:mtf-theta-ba} expresses the constraint that the throughput at a UE must be at least $\theta$ and each mmBS is a pure relay. Obviously, the new MTFS problem can be solved with the same optimization technique as proposed in Sec.~\ref{s:sched-fair}. 

As for the EC algorithm, we need to replace the data flow constraint of \eqref{eq:ec-step1-theta1} with the following \eqref{eq:ec-step1-theta-ba}.
\begin{IEEEeqnarray}{lCrl}
\IEEEyesnumber\label{eq:ec-step1-theta-ba}\IEEEyessubnumber*
\sum_{e \in \delta^-(v)} c_e t_e - \sum_{e \in \delta^+(v)} c_e t_e& = &0 \quad &\forall \, v \in \set{W} \\
\sum_{e \in \delta^-(v)} c_e t_e - \sum_{e \in \delta^+(v)} c_e t_e& \ge &\theta \quad &\forall \,v \in \set{U}
\end{IEEEeqnarray}

\subsection{Extension to multiple RF chains at each node}
Now we remove the restriction that all nodes except the eNB have single RF chain. The technique to deal with this problem is the so-called {\em node expansion} which extends the eNB expansion. 

\textbf{Node expansion.} Let $G$ be a directed graph representing the network. 
We create a expanded graph $G'$ as follows: for each node $v$ in $G$, we create $R_v$ {\em expanded nodes} $v_1, ..., v_{R_v}$ in $G'$ where $R_v$ is the number of RF chains at $v$. The expanded nodes of $v$ are collectively called a {\em super node} $v$ in $G'$.
Moreover, if there is a link of capacity $c$ between node $v$ and node $v'$ in $G$, then we install $R_v \cdot R_{v'}$ links between all combinations of $v_i, i=1,...,R_v$ and $v'_j, j=1,...,R_{v'}$. Each link $(v_i, v'_j)$ is assigned the capacity $c$. An example is shown in Fig.~\ref{fig:node_expand}.

\begin{figure}[!t]
\centering
	\includegraphics[width=0.5\textwidth]{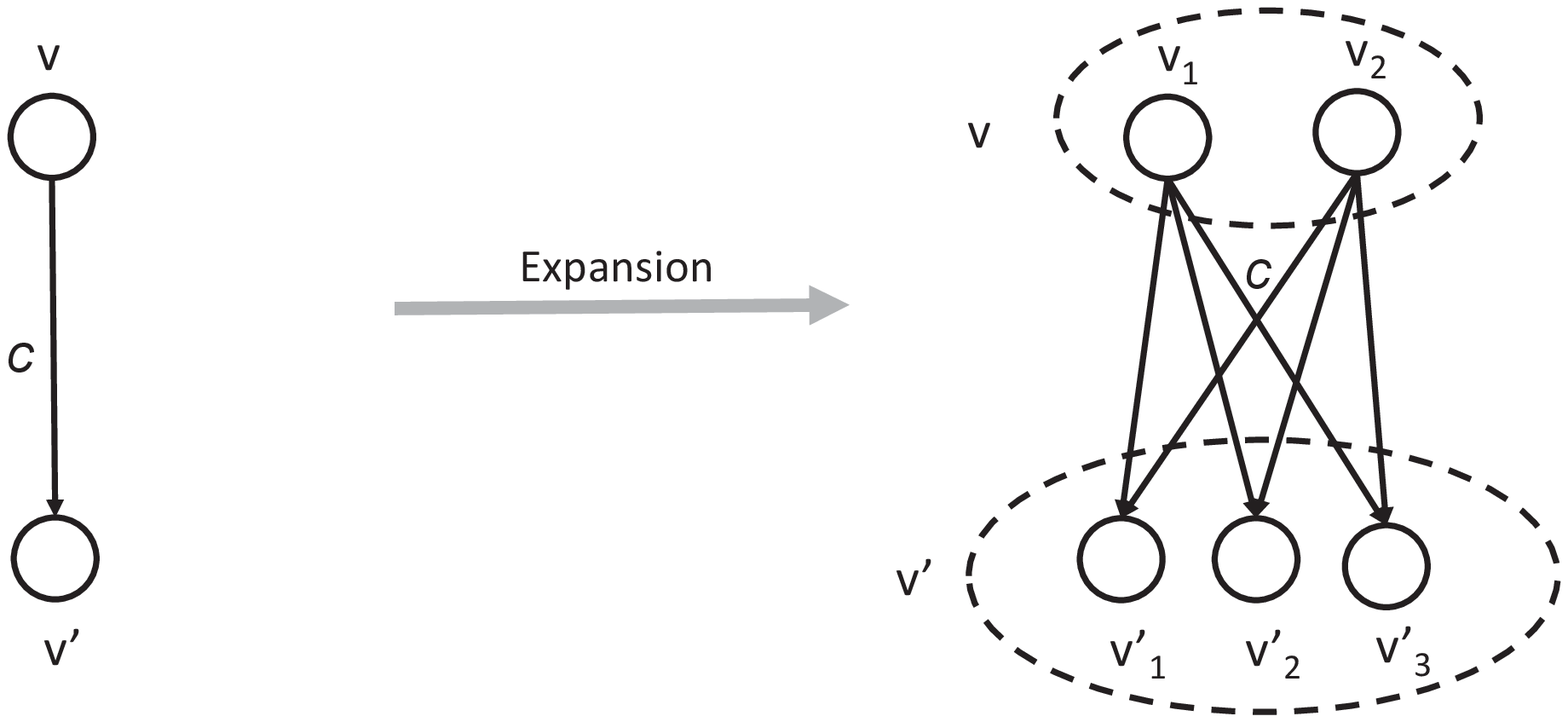}
\caption{Node expansion. $v$ has 2 RF chains and $v'$ has 3 RF chains.}
\label{fig:node_expand}	
\end{figure}

After the node expansion, the constraint of the RF chains is implicitly guaranteed by the matchings in $G'$. For the MTFS problem, the definition of the node-matching matrix $\mat{A}^\set{W}$ needs adaptation because now multiple links can be incident to a super node in a matching in $G'$. A row of $\mat{A}^\set{W}$ corresponds to an mmBS super node  (a collection of expanded nodes) and a column of $\mat{A}^\set{W}$ corresponds to a matching in $G'$. Therefore, an element of the matrix $a^\set{W}_{i,j}$ has the value of the sum capacity of all links entering the $i$-th mmBS super node in the $j$-th matching minus the sum capacity of all links leaving the $i$-th mmBS super node in the $j$-th matching.

As for the EC algorithm, we need to modify the necessary schedule constraints in \eqref{eq:ec-step1}. Since the maximum number of simultaneous mmBS-to-mmBS links increases for $R$ times when each mmBS has $R$ RF chains. This leads to $R$ times more variables of $t'_k$ in \eqref{eq:ec-step1}, which may lead to long computation time in linear program. So, we delete the constraints on $t'_k$. The following simple set of necessary schedule constraints is used to replace \eqref{eq:ec-step1},
\begin{IEEEeqnarray*}{rCl}
\sum_{e \in \delta(v)} t_e &\le& R_v\quad\forall\,v\in \set{V},
\textnormal{ and }t_e\ge 0 \quad\forall \, e \in \set{E}.
\end{IEEEeqnarray*}

Another modification is to replace the {\em expand eNB} step in Alg.~\ref{alg:ec-sched} with the following step.

\noindent \textbf{Expand node}. We perform the node expansion on $G_r$. Let the resulting graph be $G_v$. 
For a given link $(v, v')$  in $G_r$ with link time $t_{(v, v')}$, we assign the link time $\frac{t_{(v, v')}}{R_v R_{v'}}$ to the links $(v_i, v'_j)$ in $G_v$ for all combinations of $i = 1,...,R_v$ and $j = 1,...,R_{v'}$.


\section{Numerical Evaluation} \label{s:eval}
In this section, we evaluate the optimal MTFS algorithm and EC-based approximation algorithm in terms of max-min throughput, network throughput and runtime efficiency.

\subsection{Evaluation setting}
We simulate a mmWave backhaul network, where $n \times n$ mmBSs are placed on the intersections of a $n \times n$ grid and the eNB is placed in the center of the grid. The distance between two neighboring mmBSs is $d_g$. The capacity of each link is calculated with the channel model described in Sec.~\ref{s:sysmodel}. We assume a carrier frequency of 28 GHz. The channel state between any two nodes is simulated according to the statistical model derived from the real-world measurement~\cite{Akdeniz14}. The channel state has three possibilities---LOS (line-of sight), NLOS (non line-of-sight) or outage.
 The simulation parameters are listed in Tab.~\ref{tb:sim-param}.

\begin{table}[t!]
\setlength\extrarowheight{2pt}
\caption{Simulation parameters}
\centering
\footnotesize
\label{tb:sim-param}
\begin{tabular}{|p{3.9cm}| p{4.2cm} |}
\hline
\textbf{ Parameter }	 &	\textbf{ Value } 		\\
\hline
\hline
Distance between 2 mmBSs, $d_g$ 				& 100 m	\\ \hline
\multirow{2}{*}{\shortstack[l]
{Path loss parameters $\alpha, \beta, \sigma$ in \\ $PL ( d )  = \alpha + 10 \beta \log _ { 10 } { d } + \xi$}} 	& LOS: $\alpha = 61.4, \beta = 2, \sigma = 5.8$ \\ 
& NLOS: $\alpha = 72, \beta = 2.92, \sigma = 8.7$ \\ \hline
Transmission power, $p_{\text{tx}}$						& 30 dB \\ \hline
Directivity gain, $g_\text{x}$		& 30 dB \\ \hline
Bandwidth, $b$		& 1 GHz \\ \hline
Noise $N_0 = kT_0 +F + 10 \log_{10} b$ & $kT_0 = -174$ dBm/Hz, $F = 4$ dB \\ \hline
Minimum SINR threshold, $\tau$	& $-5$ dB \\ \hline
\end{tabular}
\end{table}

The proposed algorithms are implemented in MATLAB, with the exception that the optimal MTFS algorithm uses a C++ implementation for maximum weighted matching~\cite{Kolmogorov09}. 


\subsection{mmBS with single RF chain}

We evaluate the MTFS scheduling by varying the number of mmBSs from $4 \times 4$ to $16 \times 16$. The eNB has $R = 10$ RF chains and all the other nodes have single RF chain.
For each network size, 30 instances of link capacities are randomly generated and then the network is scheduled for the MTFS problem. The performance results are shown in Fig.~\ref{fig:backhaul_perf}. As expected, the optimal MTFS algorithm (OPT-MTFS) always attains the highest max-min throughput (Fig.~\ref{fig:backhaul_perf}(a)). In contrast, the max-min throughput of the EC-based approximation algorithm (EC) is smaller. However, the value is significantly better than the theoretical lower bound of Theorem~\ref{thm:ec-perf} (dashed lines in the figure).
It goes up with an increase in granularity (corresponding to a lower $t^g$). Practically, $t^g$ in the range of 0.01 to 0.001 is ideal for the network size of up to 200 nodes, as on average, the EC algorithm achieves $70\%$ to $90\%$ of the optimal max-min throughput. The price for the high max-min throughput is a decrease in runtime efficiency. For a network of 256 mmBSs, the EC algorithm with $t^g = 0.001$ runs 100x faster than the optimal MTFS algorithm (Fig.~\ref{fig:backhaul_perf}(b), note the log scale). Moreover, with the increase of the number of nodes, the runtime of the EC algorithm grows more slowly than the optimal MTFS algorithm, which shows the better scalability of the former for large networks. 

As the goal of the MTFS is to maximize the network throughput under the fairness condition, we also compare the network throughput of the optimal MTFS algorithm, the EC algorithm and the unconditional maximum network throughput (MAX-TPUT). The MAX-TPUT achieves the maximum network throughput for a given network.
It is obtained when the $\min(R, L)$ ($L$ is the number of mmBSs directly connected to the eNB) eNB-to-mmBS links with the highest capacities are active throughout the unit schedule and all the other links are inactive. If the number of mmBSs $W > L$, some of the mmBSs will have zero throughput. This is the worst case with respect to max-min fairness in throughput.
 Our evaluation results show that, on average, max-min fairness limits the network throughput to be approximately half of the maximum value. Since the simulated backhaul network is well-connected, the network throughput is in most cases equal to the max-min throughput times the number of mmBSs. 
Therefore, the relative performance of the network throughput between the optimal MTFS algorithm and the EC algorithm at different granularities is almost the same as that of the max-min throughput in Fig.~\ref{fig:backhaul_perf}(a).



\begin{figure}[!t]
\centering
\subfigure[max-min throughput $\theta$]{\includegraphics[width = 8.8cm]{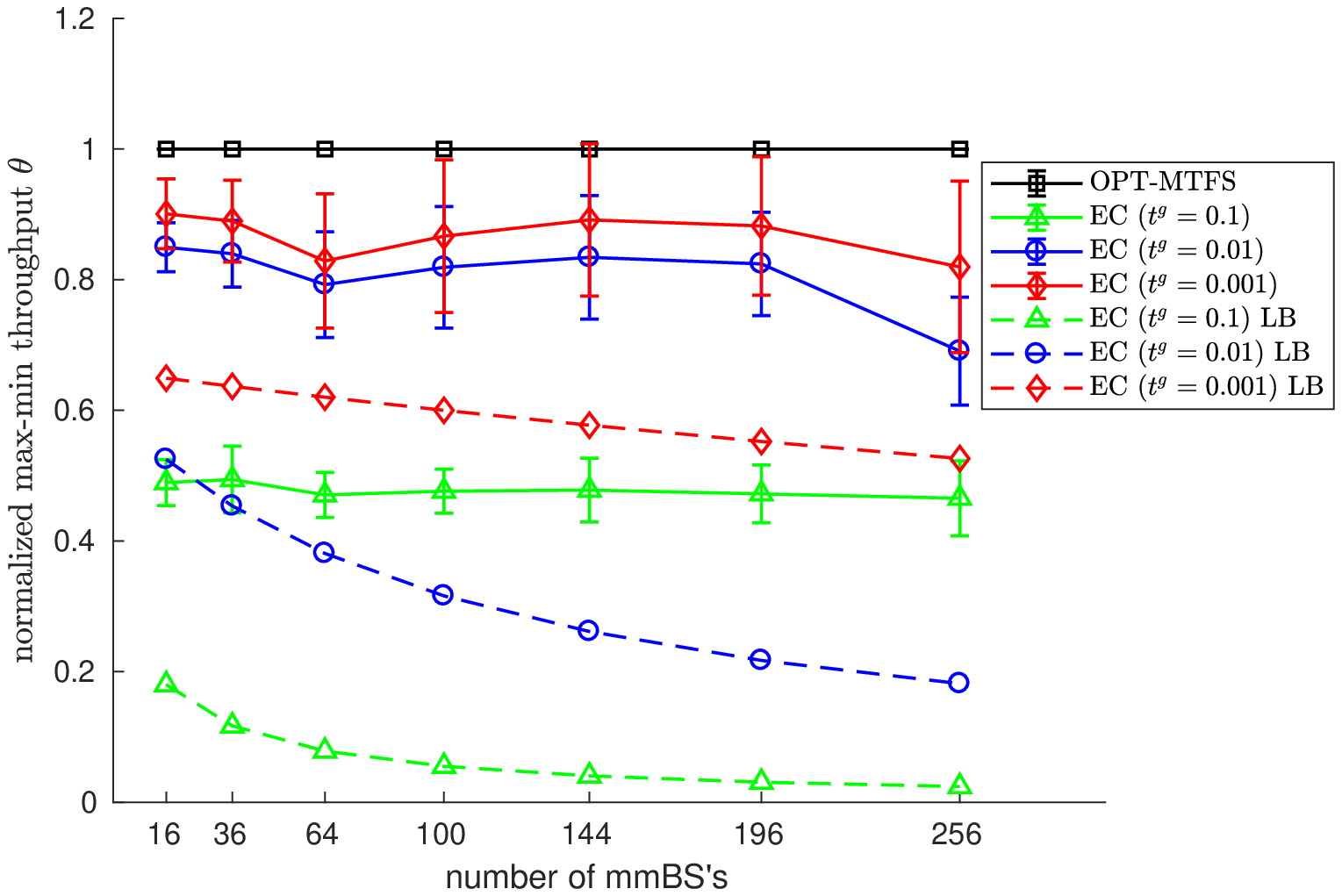}}
\subfigure[run time]{\includegraphics[width = 8cm]{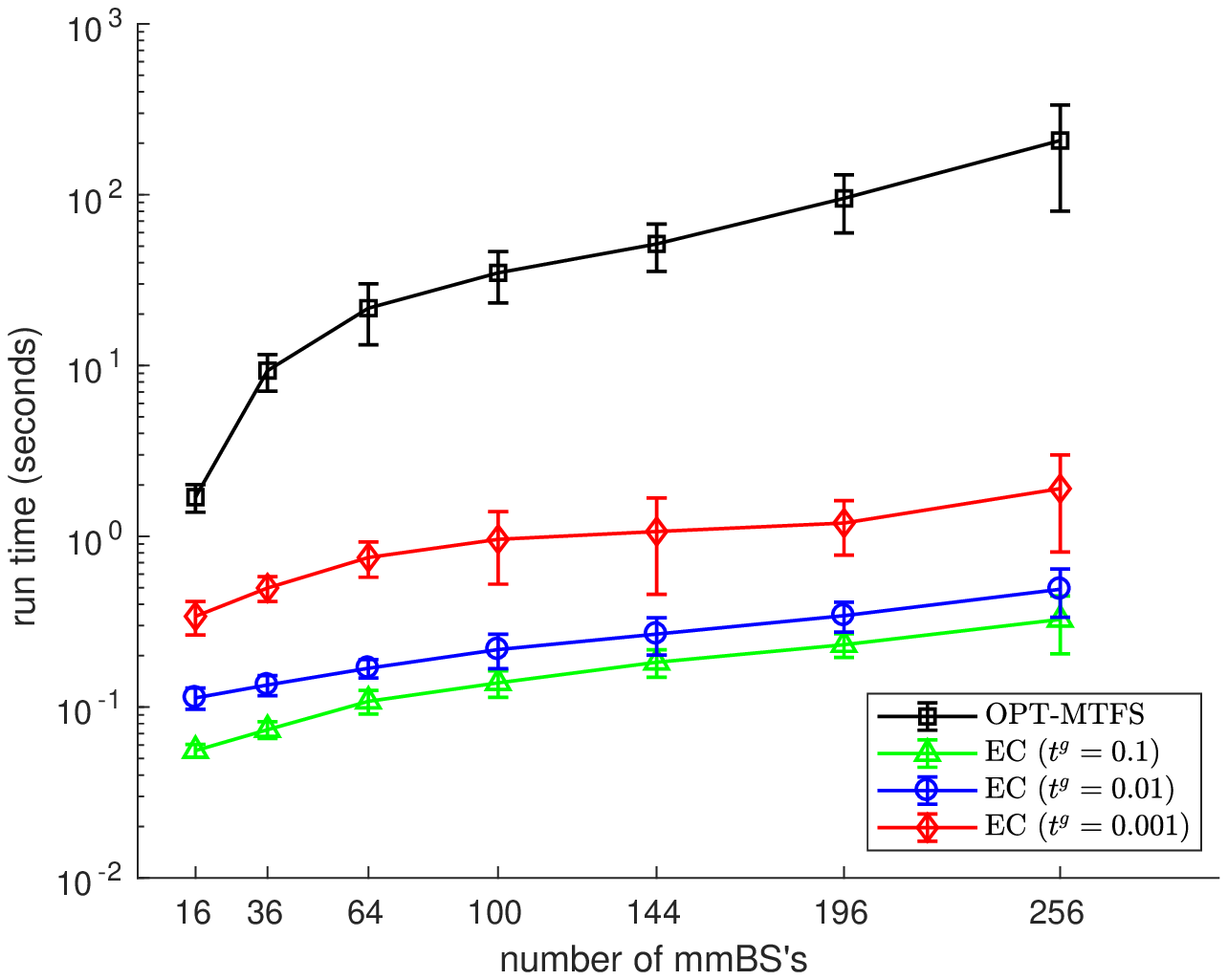}} 
\caption{Performance of the optimal MTFS algorithm (OPT-MTFS) and the EC-based approximation algorithm (EC) under the condition of single-RF-chain mmBSs.
The curves show the average performance, and the error bars show $\pm$ standard deviation.}
\label{fig:backhaul_perf}
\end{figure}

\subsection{mmBS with multiple RF chains}
We now evaluate the performance for the situation that each mmBS is equipped with multiple RF chains. For that purpose, we simulate a backhaul network with $10 \times 10$ mmBSs and evaluate the cases that the eNB has 10 RF chains and each mmBS has $R_{\set{W}}$ RF chains, with $R_{\set{W}}$ varying from 1 to 10. For each given $R_{\set{W}}$, 30 instances of random link capacities are generated. 

\begin{figure}[!t]
\centering
\subfigure[max-min throughput $\theta$]{\includegraphics[width = 8cm]{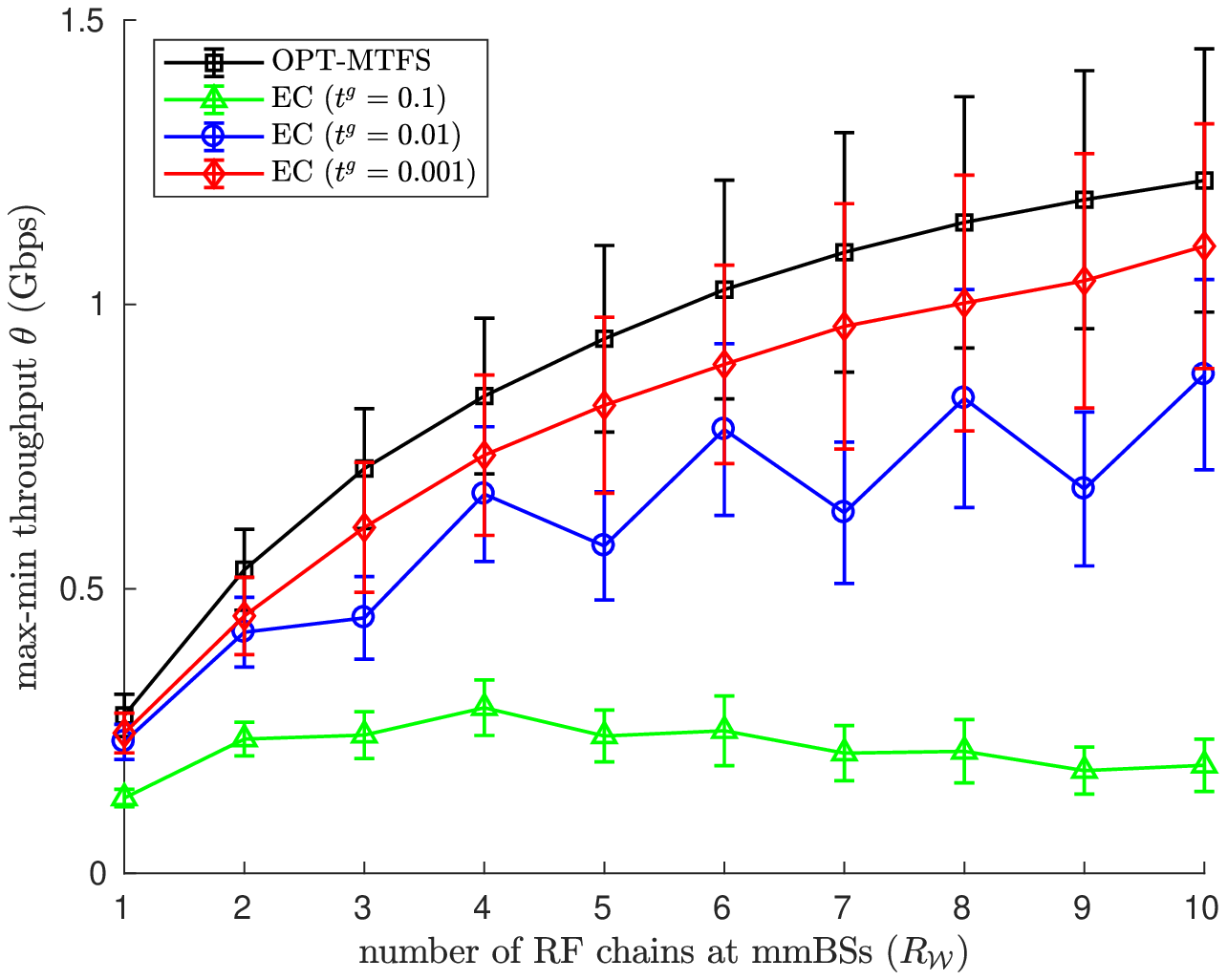}}
\subfigure[run time]{\includegraphics[width = 8cm]{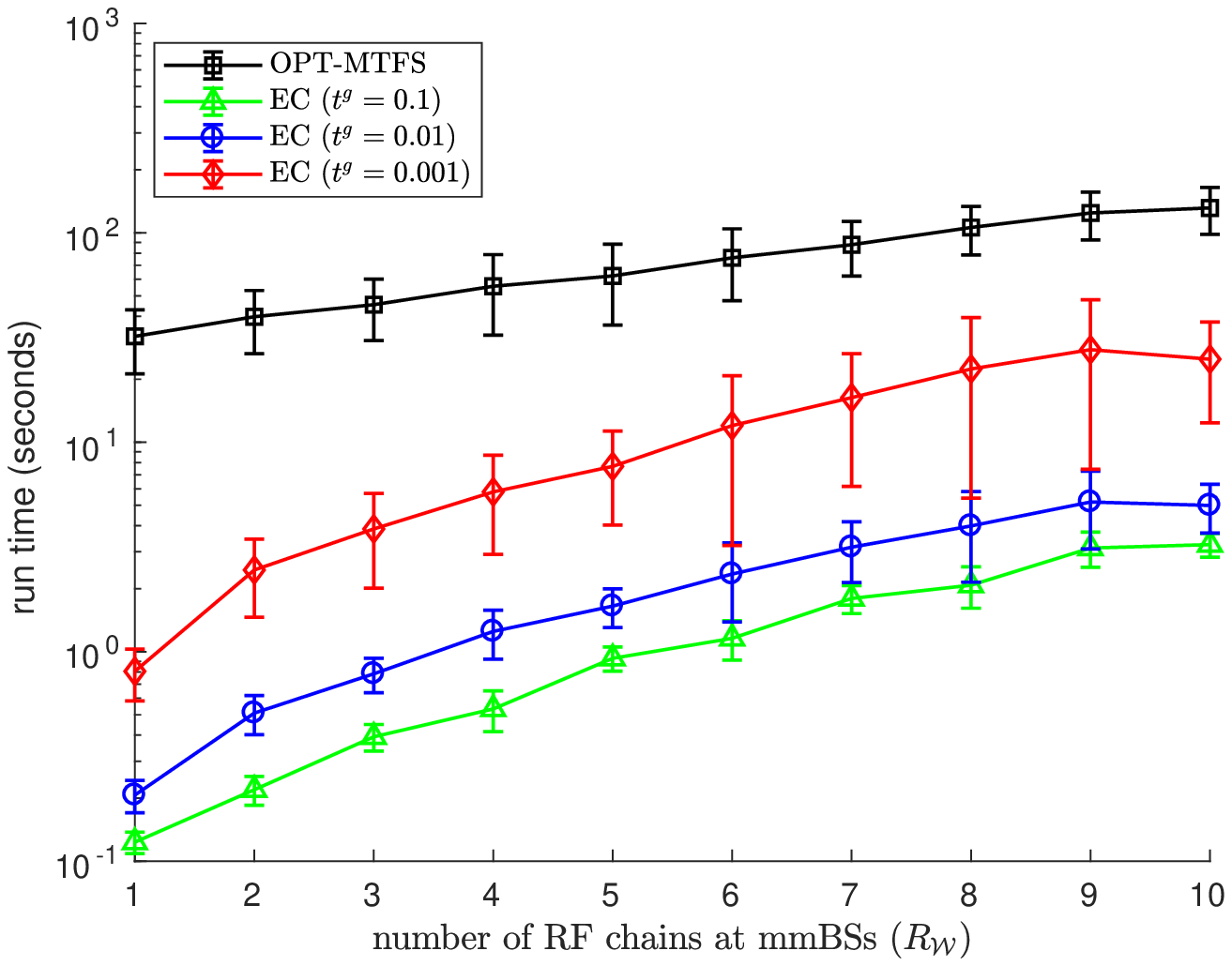}} 
\caption{Performance comparison for different number of RF chains at mmBSs. The curves show the average performance, and the error bars show $\pm$ standard deviation.}
\label{fig:comp-rf}
\end{figure}

As shown in Fig.~\ref{fig:comp-rf}(a), the max-min throughput goes up steadily with $R_{\set{W}}$. The optimal max-min throughput at $R_{\set{W}} = 10$ is over 4 times higher than the value at $R_{\set{W}}=1$. 
The difference in performance is due to the larger number of simultaneous links in the setting of multi-RF-chain mmBSs. Evaluation results show that the number of simultaneous links is almost proportional to $R_{\set{W}}$, as node expansion has increased the number of nodes for $R_{\set{W}}$ times.
Therefore, to attain higher throughput at each mmBS, an option is to equip mmBSs with multiple RF chains. However, the run time of the optimal MTFS algorithm also increases with $R_{\set{W}}$ (Fig.~\ref{fig:comp-rf}(b)). The extra time is spent in the maximum weighted matching in an expanded network with roughly $R_{\set{W}}$ times more nodes and $R_{\set{W}}^2$ times more edges. By using the EC-algorithm with $t^g = 0.001$, we achieve $85\%$ to $90\%$ of the optimal max-min throughput while using 2\% to 20\%  of the time.

\section{Conclusions}
\label{s:conclusions} 
The paper presents an optimal joint routing and scheduling method---{\em schedule-oriented optimization} for mmWave cellular networks based on matching theory. 
It can solve any problem that can be formulated as a linear program whose variables are link times and QoS metrics.
The method is demonstrated to be efficient in practice, capable of solving the maximum throughput fair scheduling (MTFS) problem within a few minutes for over 200 mmBSs. For better runtime efficiency, an edge-coloring based approximation algorithm is presented, which runs 5 to 100 times faster than the optimal algorithm while achieving over $80\%$ of the optimal performance. In summary, the proposed optimal and approximation algorithms are highly practical for mmWave cellular networks.

\section*{Acknowledgement}
This work has been supported by the German Research Foundation (DFG) in the Collaborative Research Center (SFB) 1053 ``MAKI: Multi-Mechanism-Adaptation for the Future Internet'' and by LOEWE NICER.
It has also been partially supported by the Minister of Science and Technology (MOST) of Taiwan under Grants MOST 103-2911-I-011-515 and MOST 104-2911-I-011-503. 
Lin's work was completed during his visit to the Center for Advanced Security Research Darmstadt (CASED), Technische Universit\"{a}t Darmstadt, Germany, during 2014 to 2016.

\ifthenelse{\boolean{long}}{\appendix
\subsection{Proof of Lemma~\ref{lm:schedule-matching}}
\label{sec:proof_lemma1}
\begin{proof} We first prove part (2). Let a feasible schedule $\schd{S}$ consist of $N \ge 1$ slots. Because the total length of all slots is no more than 1, we have
\begin{equation}
	\sum_{i = 1}^N t_{i} \le 1
\label{eq:slot_sum}
\end{equation}
In~\cite{Edmonds65b}, Edmonds' matching polyhedron theorem states that
  all matchings in a graph $G$ are one-to-one mapped to the vertices of
  the matching polyhedron $Q$ described by \eqref{eq:matching} (each
  vertex of $Q$ has elements of either $0$ or $1$, where $x_e = 1$
  means that the edge $e$ is in the matching).
\begin{IEEEeqnarray}{rCll}
    \IEEEyesnumber\label{eq:matching}\IEEEyessubnumber*
    \sum_{e \in \delta(v)} x_e & \le & 1 &\qquad \forall\,v \in
    \set{V},
    \\
    \sum_{e\in\set{E}(\set{O})} x_e& \le &
    \left\lfloor\frac{\card{\set{O}}}{2}\right\rfloor &\qquad
    \forall\,\text{odd set }\set{O}\subseteq\set{V}, \\
    x_e & \ge &0 &\qquad \forall\,e\in\set{E}.
  \end{IEEEeqnarray}
Since in each slot, the set of scheduled links is a matching in $G$,
  we further define $x_e^i=1$ if the link $e$ is active in $i$-th
  slot; otherwise, $x_e^i=0$. Then for a given index $i$, each
  variable $x_e^i, e \in \set{E}$ satisfies \eqref{eq:matching} when
  $x_e$ is replaced with $x_e^i$. Combining \eqref{eq:slot_sum} and
  \eqref{eq:matching}, we have
\begin{IEEEeqnarray}{rCl}
    \IEEEyesnumber\label{eq:matching-times}\IEEEyessubnumber*
    \sum_{i=1}^N t_{i}\sum_{e\in\delta(v)} x_e^i& \le &
    \sum_{i=1}^N t_{i},
    \label{eq:matching-times-1}\\
    \sum_{i=1}^N t_{i} \sum_{i\in\set{E}(\set{O})} x_e^i& \le &
    \sum_{i=1}^N \left\lfloor\frac{\card{\set{O}}}{2} \right\rfloor
    t_i.\label{eq:matching-times-2}    
  \end{IEEEeqnarray}
Since the link time $t_e = \sum_{i=1}^N t_i x_e^i\geq 0$,
  this implies that \eqref{eq:matching-times-1} and
  \eqref{eq:matching-times-2} are equivalent to
  \eqref{eq:node-matching} and \eqref{eq:oddset}, respectively. Thus,
  each feasible link time vector is a point in $P$.

We now prove part (1). Since the
  schedule polyhedron $P$ is the same as the matching polyhedron $Q$,
  each vertex of $P$ is a matching in $G$. Suppose that all the
  vertices of $P$ are $\vect{x}_1, ..., \vect{x}_K$, where $K$ is some
  positive integer. Since $P$ is a convex set, it means by definition
  that each point $\vect{t}\in P$ can be expressed by a convex
  combination of the vertices of $P$:
  \begin{IEEEeqnarray*}{rCl}
    \vect{t}=\sum_{k = 1}^K\alpha_k
    \vect{x}_k,
  \end{IEEEeqnarray*}
  where $\alpha_k\ge 0$ and $\sum_{k=1}^K\alpha_k=1$. This can be
  interpreted as follows: a point $\vect{t} \in P$ corresponds to a
  feasible schedule $\schd{S}$ of unit length. $\schd{S}$ has $K$
  slots and the length of the $k$-th slot is $\alpha_k$. Here, the
  links correspond to $\vect{x}_k$ is a matching, and they are
  scheduled in the $k$-th slot. Thus, we have proved that any
  $\vect{t}\in P$ is feasible.
\end{proof}

\subsection{Proof of Theorem~\ref{thm:MTFS-polynomial-time}}
\label{sec:proof_MTFS-polynomial-time}
The proof applies the technique used in~\cite{Nemhauser91} to prove
that fractional edge coloring can be solved in polynomial time by the
ellipsoid algorithm. Specifically, a linear program is solvable in
polynomial time if the separation problem of its dual problem can be
solved in polynomial time. The separation problem of a linear program
$J$ is to determine whether a given solution satisfies all constraints
of $J$ or a violated constraint is identified.  If we can solve both
linear programs of \eqref{eq:mtf-theta} and \eqref{eq:mtf} in
polynomial time, then we can solve the MTFS problem in polynomial
time.

We first prove that \eqref{eq:mtf-theta} can be solved in polynomial
time. The dual of \eqref{eq:mtf-theta} is
\begin{IEEEeqnarray}{lrCl}
  \IEEEyesnumber\label{eq:mtf-theta-dual}\IEEEyessubnumber*
  \textnormal{maximize} & q & &
  \\
  \text{subject to}\quad&
  \trans{\vect{p}}\mat{A}^\set{W}+q\trans{\vect{1}}& \le &\trans{\vect{0}}\label{eq:mtf-theta-dual-1}
  \\
  & \trans{\vect{p}}\vect{1}& = &1\label{eq:mtf-theta-dual-2}
  \\
  & \vect{p}& \ge &\vect{0}\label{eq:mtf-theta-dual-3}.
\end{IEEEeqnarray}
Given a solution $(\vect{p}, q)$, \eqref{eq:mtf-theta-dual-2} and
\eqref{eq:mtf-theta-dual-3} can be checked in polynomial time, since
the total number of constraints in \eqref{eq:mtf-theta-dual-2} and
\eqref{eq:mtf-theta-dual-3} is $W+1$ and $\vect{p}$ contains $W$
elements.

To check \eqref{eq:mtf-theta-dual-1}, we use the polynomial maximum
weighted matching algorithm~\cite{Edmonds65b}. A constraint in
\eqref{eq:mtf-theta-dual-1} is of the form
$\trans{\vect{p}}\vect{a}^\set{W}_k\le -q$, where $\vect{a}^\set{W}_k$
is the $k$-th column of $\mat{A}^\set{W}$ ($\vect{a}^{\set{W}}_k$
corresponds to a matching). We set the weights $w_{(v_i,v_j)}$ to the
links $(v_i,v_j)$ such that
\begin{IEEEeqnarray}{rCl}
  w_{(v_i,v_j)}=
  \begin{cases}
    c_{(v_i,v_j)} (p_j - p_i) & \text{if }v_i, v_j \in \set{W}
    \\
    c_{(v_i,v_j)} p_j         & \text{otherwise }v_i\in\set{R},v_j\in\set{W}.
  \end{cases}\IEEEeqnarraynumspace\label{eq:ftr-setw}
\end{IEEEeqnarray}

Then we perform maximum weighted matching on $G$. If the weight of the maximum
weighted matching satisfies $w \le -q$, then $(\vect{p}, q)$ satisfies
\eqref{eq:mtf-theta-dual-1}. Otherwise the maximum weighted matching
gives a violated constraint.

According to Theorem.~3.10 in~\cite{Groetschel81}, for a linear program
$J$, if we can solve the separation problem of its dual $J^*$ in
polynomial time, then we can solve both $J$ and $J^*$ in polynomial
time with the ellipsoid algorithm. This proves that
\eqref{eq:mtf-theta} can be solved in polynomial time.

Similarly, we next prove that \eqref{eq:mtf} can be solved in
polynomial time. The dual of \eqref{eq:mtf} is
\begin{IEEEeqnarray}{lrCl}
  \IEEEyesnumber\label{eq:mtf-dual}\IEEEyessubnumber*
  \textnormal{maximize} & \theta\trans{\vect{p}}\vect{1}+ q & &
  \\
  \text{subject to}\quad&
  \trans{\vect{p}}\mat{A}^\set{W}+q\trans{\vect{1}}
  & \le &-\trans{\vect{c}}\label{eq:mtf-dual-1}
  \\
  & \vect{p}& \ge &\vect{0}.
\end{IEEEeqnarray}
Given a tuple $(\vect{p}, q)$, we set the weights $w_{(v_i,v_j)}$ to
the links $(v_i,v_j)$ such that
\begin{IEEEeqnarray}{rCl}
  w_{(v_i, v_j)} =
  \begin{cases}
    c_{(v_i,v_j)} (p_j - p_i) & \text{if }v_i,v_j\in\set{W} \\
    c_{(v_i,v_j)} (p_j + 1)   & \text{otherwise }v_i\in\set{R},v_j\in\set{W}.
  \end{cases}\nonumber\\*\label{eq:ftr-setw2}
\end{IEEEeqnarray}
Then we perform maximum weighted matching on $G$. Depending on whether
the weight of the maximum weighted matching satisfies $w \le -q$, the constraints of
\eqref{eq:mtf-dual-1} are satisfied or a violated one is
identified. With the same argument as above, \eqref{eq:mtf} can be
solved in polynomial time. This complete the proof that the MTFS
problem can be solved in polynomial time with the ellipsoid method.
}{}

\bibliographystyle{IEEEtran}

\bibliography{biblio}


\end{document}